\documentclass[12pt]{article}

\usepackage[margin=1in]{geometry} 

\usepackage{setspace} 

\usepackage{xr}
\makeatletter
\newcommand*{\addFileDependency}[1]{
  \typeout{(#1)}
  \@addtofilelist{#1}
  \IfFileExists{#1}{}{\typeout{No file #1.}}
}
\makeatother

\usepackage{graphicx,verbatim,array,multicol, xcolor, lscape, mathrsfs, color,enumitem}
\usepackage{psfrag, amsmath, amsfonts, epsfig, fancybox, setspace,soul,amsthm,natbib,algorithm, algorithmicx, algpseudocode, float, subcaption, booktabs, pdflscape}
\usepackage[hidelinks]{hyperref}
\def\bW{{\bf W}}
\def\bZ{{\bf Z}}
\def\bL{{\bf L}}
\def\bw{{\bf w}}
\def\supone{^{(1)}}
\def\supzero{^{(0)}}
\def\supg{^{(g)}}
\def\transpose{{\sf \scriptscriptstyle{T}}}
\def\bU{{\bf U}}
\def\bUhat{{\widehat \bU}}
\def\trans{^{\transpose}}
\def\var{\text{Var}}
\def\bbeta{\boldsymbol{\beta}}
\def\bomega{\boldsymbol{\omega}}
\def\bPi{\boldsymbol{\Pi}}
\def\calt{\mathcal{T}}
\newtheorem{theorem}{Theorem}
\setcitestyle{authoryear}
\def\bSigma{\boldsymbol{\Sigma}}
\def\bmu{\boldsymbol{\mu}}
\newtheorem{assumption}{Assumption}

\newcommand{\norm}[1]{\left \lVert #1 \right \rVert}

\begin{document}

\setlength\parindent{15pt} 

\thispagestyle{empty}

\begin{center}
\LARGE{A Functional-Class Meta-Analytic Framework for Quantifying Surrogate Resilience}\\

\vspace*{15mm}
\normalsize Emily Hsiao$^{1}$ and Layla Parast$^{2}$ \\
\vspace*{10mm}
\normalsize $^{1}$Department of Statistics and Data Sciences, University of Texas at Austin, \\ \normalsize 105 E 24th St D9800, Austin, TX 78705 \\
\normalsize $^{2}$Department of Statistics and Data Sciences, University of Texas at Austin, \\ \normalsize 105 E 24th St D9800, Austin, TX 78705, parast@austin.utexas.edu \\

\end{center}
\clearpage
\thispagestyle{empty}
\begin{abstract}
 A surrogate marker is a biomarker or other physical measurement used to replace a primary outcome in clinical trials to evaluate a treatment effect when the primary outcome of interest is costly, invasive, or takes a long time to observe. However, replacing a primary outcome with a surrogate can lead to the ``surrogate paradox,” in which a treatment appears beneficial based on the surrogate but is actually harmful with respect to the primary outcome. In this paper, we propose a functional class-based method to assess resilience to the surrogate paradox in a meta-analytic setting. Our method leverages data from $K$ completed studies in which the surrogate marker and primary outcome have been measured to make inference on a new study in which only the surrogate is measured. We do not assume direct transportability of the conditional mean function from the completed studies to the new study; instead, we consider deviations of functions from those observed in the completed studies to estimate the ``resilience probability" i.e., the probability of the surrogate paradox in the new study. We investigate the performance of our proposed method through a simulation study and apply our method to data from clinical trials in schizophrenia.
\end{abstract}

\noindent \textbf{Keywords:}  causal inference, clinical trial, meta-analysis, sensitivity, surrogate paradox, treatment effect

\clearpage
\setcounter{page}{1}
\section{Introduction}
A surrogate marker is a biomarker or other physical measurement used to replace a primary outcome in clinical trials to evaluate a treatment effect when the primary outcome of interest is costly, invasive, or takes a long time to observe \citep{temple1999surrogate}. The Food and Drug Administration in the United States currently allows drugs to be approved early through the Accelerated Approval Program, if they have proven effectiveness on a known surrogate marker \citep{FDA_program}. There are many examples in which this program has been beneficial in allowing effective treatments to reach patients sooner. For example, Gleevec (imatinib), a treatment for individuals with chronic myelogenous leukemia, was initially approved through this program based on a demonstrated effect on surrogate markers including hematologic and cytogenetic responses, i.e., various blood and platelet counts \citep{cohen2002approval,cohen2002gleevec}. This early approval was followed by subsequent long-term studies that demonstrated a beneficial treatment effect on progression-free and overall survival \citep{cohen2005us,druker2006five,hochhaus2017long}. However, the use of surrogate markers remains controversial due to the phenomenon known as the surrogate paradox, in which a treatment has a beneficial effect on a surrogate, yet later is found to actually be harmful with respect to the primary outcome of interest \citep{chen2007criteria, vanderweele2013surrogate}. The most famous case of the surrogate paradox involved anti-arrhythmia drugs meant to protect against cardiovascular death which was initially approved when studies found that it lowered rates of ventricular arrhythmia \citep{fleming1996surrogate,temple1999surrogate,lesko2001use}. It was later found to in fact increase the risk of cardiovascular death, but the discovery was made only after the mistake had cost thousands of lives \citep{echt1991mortality,moore}. Thus, when evaluating a surrogate marker and using a surrogate marker in a trial, it is highly important to properly assess the risk of the surrogate paradox. 

Notably, in this paper, we specifically focus on the surrogate paradox. Though there are many existing surrogate validation techniques, it is not our primary goal to address these different techniques, as the surrogate paradox may occur regardless of the chosen validation method \citep{elliott2023surrogate,stijven2024proportion,gilbert2008evaluating,huang2011comparing,freedman1992statistical,parast2015robust,parast2024rank,burzykowski2005evaluation,buyse2000validation,buyse1998criteria}. The surrogate paradox has been well-discussed in the existing surrogate literature \citep{chen2007criteria, vanderweele2013surrogate,yin2020novel,wu2011sufficient}. Statistical papers that propose surrogate validation methods typically impose formal conditions that, if they hold, ensure the paradox will not occur (our own included). \citet{wang2002measure} showed that the paradox will not occur if the functions governing the relationship between the surrogate and the primary outcome satisfy three sufficient but not necessary conditions. Recent work by \citet{hsiao2025} proposed empirical nonparametric methods to formally test these conditions with the observed data.  In the specific context of one-parameter exponential families, generalized linear models and Cox's proportional hazard models, \citet{wu2011sufficient} derived sufficient conditions to prevent the surrogate paradox.  \citet{yin2020novel} further investigated the conditions that would ensure protection against the paradox, but only in the case when both surrogate and primary outcome are binary. While useful, since these conditions are sufficient (not necessary), researchers still lack practical guidance for interpreting results if they determine or suspect that one or more of these conditions do not hold. For example, if one condition does not hold, is the surrogate then invalid? 

Furthermore, whether the conditions hold in the study used to validate the surrogate is, arguably, irrelevant. That is, surrogate validation methods are generally applied to data from a completed study (or studies) where both the surrogate and the primary outcome have been measured. Thus, we know exactly whether the surrogate paradox occurs in this completed study as we have both the estimated treatment effect on the surrogate and on the outcome. The question is whether the surrogate paradox will occur in a new study if that new study actually uses the surrogate alone to make inference on the treatment effect. Most relevant to this question is work by \citet{elliott2015surrogacy} which proposed methods to assess the probability of the surrogate paradox in a new study. Their proposed method uses a meta-analytic causal association framework involving multivariate normal trial-
level random effects and additionally allows one to determine the required size of a beneficial effect of the treatment on the surrogate that minimizes the risk
of a harmful effect of the treatment on the outcome. \citet{shafie2023incorporating} expanded on this approach by incorporating baseline covariates, allowing the covariate effects on the surrogate and outcome to vary across trials. Together, these methods provide powerful tools for quantifying the risk of the surrogate paradox, but they rely on strong modeling assumptions, including additive linearity and multivariate normality of the joint distribution of surrogates and outcomes in both treatment groups. In this paper, we build upon this framework by allowing the conditional distributions that link the surrogate and outcome to come from a broad functional class, drawing on flexible representations that capture complex trial-level deviations while incorporating knowledge about their structure that is learned from the completed studies.

The problem that we aim to address in this paper is described as follows. Assume that we have a surrogate marker which has been validated using data from a number of previous studies. In addition, suppose we have data from $K$ completed studies in which  both the surrogate and the primary outcome have been measured. We now have a new study, wherein only the surrogate has been measured, and we find that the treatment has a positive (beneficial) effect on the surrogate. \textit{How can we assess the likelihood that the actual treatment effect on the primary outcome in this new study will also be positive, without ever measuring the primary outcome in this new study?} In this paper, we answer this question by proposing a functional class-based meta-analytic method that allows us estimate the probability that the true treatment effect will be negative. In an ideal setting, we would want this probability, referred  to as the ``resilience probability" \citep{hsiao2025resilience}, to be small. 

Previous work in \citet{hsiao2025resilience} has proposed a method specific to the setting where there is only a single prior study. While useful, a significant limitation is the requirement that the researcher specify fixed variance parameters which are typically unknown. In contrast, in this paper, we focus on a setting where there are $K>1$ prior studies, and thus, overcome this limitation by estimating the variance parameters via a meta-analytic approach. 

\vspace*{-0.2in}
\section{Methods}
\vspace*{-0.1in}
\subsection{Notation and Meta-Analytic Setting}

Let $G$ be a binary randomized treatment indicator, where $G = 0$ in the control group and $G = 1$ in the treatment group. Let $Y$ denote the primary outcome of interest, and $S$ the value of the surrogate marker. Without loss of generality, we assume higher values of $Y$ and $S$ are ``better". In our setting, we have $K$ completed studies where both $S$ and $Y$ have been measured, and a new study, referred to as Study $K+1$ where only $S$ has been measured. We use potential outcomes notation for $Y$ and $S$ such that $Y_{k}\supg$ and $S_{k}\supg$ denote the primary outcome and surrogate under treatment $g$ in study $k$. Of course, only one of the potential outcomes, e.g., $S_k\supone$ or $S_k\supzero$, can be observed for each individual.  For each study $k = 1,...,K$, the observed data consists of $\{S_{igk}, Y_{igk}\}_{i=1}^{n_{k,g}}$,  where $n_{k,g}$ is the sample size in group $g$ of study $k$ and $n_{k}$ is the total sample size in study $k$. For $k=K+1$, the data consists of only the surrogate measured in both groups i.e., $\{S_{i0(K+1)}\}_{i=1}^{n_{K+1,0}}$ from the control group and $\{S_{i1(K+1)}\}_{i=1}^{n_{k+1,1}}$ from the treatment group. To ease notation, we generally work with the study-level vectors $Y_{kg} =  (Y_{1gk},...,Y_{(n_{k,g})gk})'$ and $S_{kg} =  (S_{1gk},...,S_{(n_{k,g})gk})'$ for treatment group $g$ in study $k$

Our ultimate quantity of interest is the treatment effect on the primary outcome in Study $K+1$: 
$$\Delta_{K+1} = E\left (Y_{K+1}\supone - Y_{K+1}\supzero \right ) $$
Specifically, we aim to develop a method to estimate the probability that this treatment effect is negative: $$p_0\equiv P(\Delta_{K+1}<0),$$ where this quantity, in the $K=1$ setting, was proposed in \citet{hsiao2025resilience}.  Since $Y$ is not measured in Study $K+1$, we aim to use information from the $K$ completed studies and the surrogates which are measured in Study $K+1$ to estimate this quantity.

Existing work has offered methods to calculate $\Delta_{K+1}$ and explicitly test the null hypothesis that $\Delta_{K+1} = 0$ \citep{wang2020model,price2018estimation,parast2022using}.  However, these methods rely on strict assumptions on transportability from the completed studies to the new study that are unlikely to hold in practice. Most often, the methods require estimation of the group-specific conditional mean functions defined as $\mu_{k,g}(s) = E(Y\supg_{k} | S\supg_{k}=s)$, i.e., the conditional mean of the primary outcome given the surrogate in group $g$. While these conditional mean functions can be flexibly estimated in the completed study either parametrically or nonparametrically \citep{freedman1992statistical,wang2020model,parast2016robust}, assuming that the function remains the same in the new study is likely unrealistic. There are many factors which can influence the relationship between a surrogate and a primary outcome from one study to another, including for example, differences in the administration of the treatment and/or in the study population. In the following section, we propose an approach that relaxes such a strict transportability assumption and allows for the conditional mean functions to differ from the completed studies.

\subsection{Proposed Functional Class-Based Method}
\subsubsection{Functional Class}
The key feature of our proposed method is that we assume that the true underlying conditional mean function in the treatment and control groups, each belong to a flexible functional class $F(\mu, \Sigma)$ characterized by a mean vector $\mu$ and covariance matrix $\Sigma$. For instance, $F$ could represent a Gaussian–process family, a Fourier–series representation,
or a polynomial functional class, each of which can flexibly approximate a broad range of
well-behaved distributions \citep{rudinprinciples}. In this paper, we specifically consider the Gaussian Process (GP) class. We denote the true conditional mean function in group $g$ as  $m_g(s)$ and we assume that, for each study, the observed conditional mean function is a random deviation from $m_g(s)$. Formally, we posit the following:

$$Y_{kg} = F_{k,g}+ \epsilon, ~~\mbox{where}$$
$$F_{k,g} \sim GP(m_g(S_{kg}), C(S_{kg}, S_{kg};\sigma^2_g, \theta_g))$$
$$\epsilon \sim N(0, diag(v^2_g)),$$
and GP refers to a Gaussian Process with mean vector $m_g(S_{kg})$ and a covariance kernel $C(\cdot)$ defined on the values of the vector $S_{kg}$, parameterized by $\sigma_g^2$ and $\theta_g$. In essence, we assume that the observed $Y$ vector of values in each study is an independent draw from a Gaussian Process with the same mean function and covariance kernel parameters. Notice that by assuming a Gaussian Process, each  $Y_{kg}$ vector is multivariate normal with mean $\mu = m_g(S_{kg})$ and covariance matrix $\Sigma = C(S_{kg}, S_{kg}; \sigma_g^2, \theta_g) + diag(v_g^2)$.

\subsubsection{Parametrization and Estimation}
Next, we address the question of how to parameterize and estimate $m_g$ and $C(\cdot)$. If we only had one completed study, our choices here would be limited. We would be necessarily forced to simply take $m_g$ to be the estimated mean in the completed study, for example using nonparametric estimation; the covariance kernel would be more challenging since with only a single study, direct estimation of the full variability is not feasible. Fortunately, in our meta-analytic setting we may leverage the $K$ studies for estimation. Consider a model that can be parameterized by a vector $\beta$, such as a linear model or a polynomial basis, as long as the estimation of $\beta$ remains tractable. We propose to estimate $m_g$ specifically using cubic B-splines given its flexibility while maintaining reasonable tractability; that is,
\begin{equation}
m_g(S_{kg}) = \beta'_g B(S_{kg})
    \label{m_model}
\end{equation}
where $\beta_g$ denotes the unknown parameter vector and $B(\cdot)$ denotes the spline basis matrix from the input vector with specified knots and boundary knots; for the covariance kernel of the Gaussian Process, we use the radial basis function (RBF) kernel. Given the model defined above, it follows that the likelihood of the vectors $Y_{1g},...,Y_{kg}$ for each treatment group takes the form:
\begin{align*}
    p(Y_{1g},..,Y_{kKg}; \beta_g, \sigma^2_g, \theta_g, v^2_g) &= \prod_{k=1}^K p(Y_{kg}; S_{kg}, \beta\supg, \sigma^2_g, \theta_g, v^2_g) \\
    &= \prod (2\pi)^{n_k} |C_k|^{-1/2} \exp\left(\frac{1}{2}(Y_k - m_g(S_{kg}))' C_k^{-1} (Y_k - m_g(S_{kg}))\right)
\end{align*}
where $|C_k|$ denotes the determinant of $C_k$ and $C_k = RBF(S_{kg}, S_{kg};\sigma^2_g, \theta_g) + v^2_gI$. The log likelihood becomes 
$$\sum_{k=1}^K -\frac{1}{2} \log(|C_k|) -\frac{1}{2}(Y_{kg} - m_g(S_{kg}))'C_k^{-1}(Y_{kg} - m_g(S_{kg})) + Const.$$
We maximize this likelihood for the treatment and control groups separately to get the estimated vector of parameters $(\widehat{\beta}_0,\widehat{\beta}_1, \widehat{\sigma}^2_0, \widehat{\sigma}^2_1, \widehat{\theta}_0, \widehat{\theta}_1)$ using any optimization method (in our implementation, we use \texttt{nlminb}). Notably, this finite–dimensional representation of the mean within our specification of $m_g(\cdot)$ can be viewed as a truncated Karhunen–Loève 
(or other orthogonal) expansion of a potentially infinite-dimensional mean function 
\citep{adler2009random}, but we adopt the finite basis form as a modeling assumption.

\subsubsection{Resilience Probability}
To obtain an estimate of the resilience probability for Study $K+1$,  we use our estimated class to generate a synthetic distribution of potential $\Delta_{K+1}$ values. Specifically, we plug the estimated parameters into our posited distribution, along with the observed $S_{(K+1)g}$ vectors in the new study, and then generate a set of synthetic $Y_{(K+1)0}$ and $Y_{(K+1)1}$ vectors. That is, we calculate the estimated mean vector and covariance matrix for the new study as:
$$\widehat{m}_g(S_{(K+1)g}) = \widehat{\beta}_g \cdot B(S_{(K+1)g})$$
$$\widehat{\Sigma}_g = C(S_{(K+1)g}, S_{(K+1)g}; \widehat{\sigma}_g^2, \widehat{\theta}_g) + diag(\widehat{v}_g^2),$$
and sample from a multivariate normal with parameters $\widehat{m}_g(S_{(K+1)g})$ and $\widehat{\Sigma}_g$ to obtain the synthetic $Y$ data for Study $K+1$. This idea is illustrated in Figure \ref{fig:idea} where the black line reflects $\widehat{m}_g(S_{(K+1)g})$ and each grey line is randomly generated from a multivariate normal with parameters $\widehat{m}_g(S_{(K+1)g})$ and $\widehat{\Sigma}_g$. Each grey line produces a resulting synthetic set for $Y_{(K+1)g}$. We use this synthetic data in each group to obtain a corresponding synthetic $\Delta_{K+1}$. Repeating this process $J$ times (e.g., $J=500$) yields an empirical distribution $ \left\{\widehat{\Delta}_{K+1,j} \mid j=1, \cdots J \right\}.$ Finally, we estimate the resilience probability, as:
\begin{equation*}
    \widehat{p} = J^{-1} \sum_{j-1}^J I(\widehat{\Delta}_{K+1, j} <0).
\end{equation*}
A small value of $\widehat{p}$ would reflect a low probability of the surrogate paradox in Study $K+1$, whereas a large value would indicate that the surrogate paradox is likely in Study $K+1$.


\section{Properties and Inference \label{theory}}
\subsection{Asymptotic Properties}
To understand the behavior of $\widehat{p}$, we next examine its theoretical properties. After stating the assumptions that justify the construction of the estimator, we show that our proposed estimator converges to the target quantity $p_0$. We then characterize its variance and propose a procedure to perform inference in practice. Specifically, we impose the following assumptions:

\begin{assumption}[Independent Studies]\label{asmp:ind}
For each study $k = 1,\dots,K$, we assume (i) $\{(S_{k0},$ $S_{k1},Y_{k0},Y_{k1})\}_{k=1}^K$ are mutually independent across studies; (ii) the observed data in groups $g=0$ and $g=1$ are independent; (iii) the per-study sample sizes are bounded i.e., $n_k \le C < \infty$ for all $k$.
\end{assumption}

\begin{assumption}[IID, Study $K+1$]\label{asmp:K1}
Within Study $K+1$, the observed surrogates in the control group, $\{S_{i(K+1)0}\}_{i=1}^{n_{(K+1),0}}$, are independent and identically distributed (i.i.d.), and the observed surrogates in the treatment group, $\{S_{i(K+1)1}\}_{i=1}^{n_{(K+1),1}}$, are i.i.d. Furthermore, the sample proportion satisfies $n_{(K+1),0}/n_{K+1}  \rightarrow r \in (0,1)$ as $n_{(K+1),0}, n_{(K+1),1} \to \infty$.
\end{assumption}

\begin{assumption}[Functional Class and Basis]\label{asmp:FC}
For each study $k$ and group $g\in\{0,1\}$,
$$
Y_{kg} = F_{kg} + \varepsilon_{kg},\qquad F_{kg}\sim GP\big(m_g(\cdot),\,C(\cdot,\cdot;\sigma_g^2,\theta_g)\big),
$$
with $m_g(s)=\beta_g' B(s)=\sum_{l=1}^L\beta_{gl}B_l(s)$ and known basis $B(s)=(B_1(s),\ldots,B_L(s))^\top$ such that each $B_l(\cdot)$ is bounded and continuous on the support of $S$, and $L$ is fixed (does not grow with $n$). The noise satisfies $\varepsilon_{kg}\sim N(0,v_g^2)$ independent across $k$ and independent of $F_{kg}$ given $S_{kg}$.
\end{assumption}

\begin{assumption}[Parameter and Identifiability Regularity]\label{asmp:regularity}
(i) The parameter spaces for $\beta_g$, $\sigma_g^2$, and $\theta_g$ are compact and the true parameter lies in the interior; (ii) The design matrix for the basis in each study has full column rank (no perfect multicollinearity), so $\beta_g$ is identified; 
(iii) finite moments: $\mathbb{E}[\|Y_{kg}\|_2^2]<\infty$ and $\|B(S_{kg})\|_2\le B^*<\infty$ for all $k$; (iv) $\lim_{K,n_s\to\infty} \kappa$ exists and is finite, where $\kappa = (n_s/K)$ and $n_s = [n_{(K+1),1}n_{(K+1),0}]/n_{K+1}$; (v) standard MLE regularity conditions hold for the GP likelihood (smooth log-likelihood, identifiable parameters, non-singular Fisher information).
\end{assumption}

\begin{assumption}[Covariance Regularity Conditions]\label{asmp:covregularity}
Let $\phi_g = (\sigma_g^2, \theta_g)$ and denote $C(S_{kg},S_{kg};$ $\sigma_g^2,\theta_g)$  as  $C_k(\phi_g)$ for notational simplicity. The covariance matrix of each study $C_k(\phi_g)$ satisfies four conditions (which are satisfied by the RBF kernel under compact $\phi_g$ space): (i) $C_k(\phi_g)$ is positive definite for all $\theta$; (ii) the eigenvalues of $C_k(\phi_g)$ are bounded above and below by $\lambda_{min} > 0$ and $\lambda_{max} < \infty$; (iii) $C_k(\phi_g)$ is Lipschitz continuous in $\phi_g$ with respect to the Frobenius norm; (iv) $C_k(\phi_g)$ is continuously differentiable in $\phi_g$.
 \end{assumption}

 \begin{theorem}[Asymptotic Properties of $\widehat{p}$]\label{thm:asymp_normality}
Under Assumptions \ref{asmp:ind}--\ref{asmp:covregularity}, the estimator $\widehat{p}$ is consistent for $p_0$, and satisfies
$$
\sqrt{\frac{n_{(K+1),1}n_{(K+1),0}}{n_{K+1}}}(\widehat{p} - p_0) \xrightarrow{d} N(0, \sigma_P^2),$$
as $K\to\infty$ and $n_{(K+1),0}, n_{(K+1),1} \to \infty$, where $\sigma_P^2$ is the asymptotic variance defined in Appendix A.
\end{theorem}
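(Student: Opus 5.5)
The estimator $\widehat{p}$ is a smooth functional of two ingredients: the MLE $\widehat{\eta}=(\widehat{\beta}_g,\widehat{\sigma}^2_g,\widehat{\theta}_g,\widehat{v}^2_g)_{g=0,1}$ from the $K$ completed studies, and the empirical distribution of the surrogates in Study $K+1$. The proof therefore proceeds as a delta-method argument in three layers.

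First, I pin down what $p_0$ is under the model. Given $S_{(K+1)g}$ and $\eta$, a synthetic $\widehat{\Delta}_{K+1}$ is a linear contrast of Gaussian vectors:
$$\widehat{\Delta}_{K+1}=n_{(K+1),1}^{-1}\mathbf{1}'Y_{(K+1)1}-n_{(K+1),0}^{-1}\mathbf{1}'Y_{(K+1)0}.$$
It is therefore exactly normal, with mean $\bar{m}_1-\bar{m}_0$, where $\bar m_g=n_{(K+1),g}^{-1}\sum_i \beta_g'B(S_{i(K+1)g})$. Its variance is $\tau^2=\sum_g n_{(K+1),g}^{-2}\mathbf{1}'\Sigma_g\mathbf{1}$. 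Hence
$$P(\widehat\Delta_{K+1}<0\mid S,\eta)=\Phi\big(-(\bar m_1-\bar m_0)/\tau\big)=:h(\bar m_1-\bar m_0,\tau;\eta),$$
and $p_0$ is this same map evaluated at the true $\eta_0$ and the population quantities $E\,m_g(S^{(g)}_{K+1})$ and the limiting scaled variance. The Monte Carlo error $\widehat{p}-h(\cdot;\widehat\eta)$ is $O_p(J^{-1/2})$ by the binomial CLT. I take $J\to\infty$ fast enough (e.g.\ $J/n_s\to\infty$) that it is negligible, and I state this as an implicit requirement.

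Second, I handle the MLE layer. Under Assumptions \ref{asmp:ind}, \ref{asmp:FC}, \ref{asmp:regularity}(i),(ii),(v) and \ref{asmp:covregularity}, the per-study log-likelihood contributions are independent, non-identically distributed terms with uniformly bounded dimension ($n_k\le C$). Eigenvalue bounds and Lipschitz continuity of $C_k(\phi_g)$ give a uniform law of large numbers over the compact parameter space, and hence consistency. A Lindeberg CLT for the score, using bounded eigenvalues and $\|B\|\le B^*$ to control third moments of quadratic forms in Gaussians, together with non-singular Fisher information, gives $\sqrt{K}(\widehat\eta-\eta_0)\to N(0,I^{-1})$. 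The two treatment groups decouple by Assumption \ref{asmp:ind}(ii).

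Third, I handle the new-study layer. By Assumption \ref{asmp:K1} and bounded continuous $B_l$, $\bar B_g=n_{(K+1),g}^{-1}\sum_i B(S_{i(K+1)g})$ satisfies a CLT. The two groups are independent, so $\sqrt{n_s}\big((\bar B_1-EB_1),(\bar B_0-EB_0)\big)$ is asymptotically normal with covariance blocks $r\,\mathrm{Var}(B(S^{(1)}))$ and $(1-r)\,\mathrm{Var}(B(S^{(0)}))$. This CLT is independent of $\widehat\eta$ because the data come from disjoint studies. Writing $\widehat p-p_0=\nabla h'(\widehat\eta-\eta_0)+\nabla h'(\bar B-EB)+o_p$ and multiplying by $\sqrt{n_s}=\sqrt{\kappa K}$, Assumption \ref{asmp:regularity}(iv) makes the first term contribute $\kappa\,\nabla_\eta h' I^{-1}\nabla_\eta h$. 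The second term contributes $\beta$-weighted basis variances through $\phi(\cdot)$. These sum to $\sigma_P^2$, and consistency follows from the same expansion.

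The main obstacle is the behaviour of $\tau$. Under a GP with smooth RBF kernel, $\mathbf 1'\Sigma_g\mathbf 1/n^2$ does not vanish: the functional deviation term persists, while the noise term decays like $v_g^2/n$. I must show that $\tau^2\to\tau_\infty^2(\eta)>0$, a limit given by a double integral of the kernel against the surrogate distribution. This requires a U-statistic argument: $n^{-2}\sum_{i,j}C(S_i,S_j)$ is a V-statistic with bounded kernel, so it converges at rate $n^{-1/2}$. That keeps $h$ nondegenerate and differentiable, and it also contributes a further U-statistic term to the linearization, which I will fold into the $\bar B$ CLT via its Hoeffding projection. If this degenerates (e.g.\ $\tau_\infty=0$), $h$ becomes a step function and normality fails, so I would verify $\sigma_g^2>0$ in the interior of the parameter space to rule it out.
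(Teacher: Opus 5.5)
Your proposal is correct. It has the same three-layer architecture as the paper: $p_0$ as a smooth functional; $\sqrt{K}$-asymptotics of the pooled MLE via a Lipschitz-plus-envelope ULLN and a score CLT; and a delta method combining this, weighted by $\kappa$, with an independent CLT for the Study $K+1$ surrogates. Where you differ materially is the step the paper leaves abstract. The paper only posits a differentiable $h(\Gamma_0,\Gamma_1,F_{S_{K+1}^{(0)}},F_{S_{K+1}^{(1)}})$; its Step 1 in fact writes $p_0$ as the probability of a deterministic event. It then reparameterizes $h$ as $H(\Gamma,\mathbf{u}(F))$, so the new-study surrogates enter only through the basis integrals $\int B_l\,dF$. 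Your closed form $\Phi\{-(\bar m_1-\bar m_0)/\tau\}$ makes differentiability checkable. You should write out why its limit is the theorem's $p_0$: under the GP, $\Delta_{K+1}=\int F_{(K+1)1}\,dF_{S_{K+1}^{(1)}}-\int F_{(K+1)0}\,dF_{S_{K+1}^{(0)}}$ is itself normal with mean $\mu_\infty$ and variance $\tau_\infty^2$. Once that is in place, your form exposes a second channel the paper misses. $\widehat p$ also depends on the surrogate sample through the V-statistic $n_g^{-2}\sum_{i,j}C(S_i,S_j;\widehat\phi_g)$. Its Hoeffding projection is of order $n^{-1/2}$ and correlated with $\bar B_g$. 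It enters with weight $\partial h/\partial\tau^2\propto\mu_\infty$, which is nonzero unless $\mu_\infty=0$. The paper's $D_{\mathbf{u}}\Upsilon D_{\mathbf{u}}'$ omits this term, so the variance your argument delivers will not coincide with the Appendix A formula. Say so explicitly rather than asserting that your contributions ``sum to $\sigma_P^2$.'' Your requirement $J/n_s\to\infty$ is likewise genuinely needed for a Monte Carlo $\widehat p$; the paper silently takes $J=\infty$ by writing $\widehat p=h(\widehat\Gamma,\widehat F)$ ``by construction.'' The paper's route buys brevity and avoids committing to a GP-specific formula. Yours buys a verifiable $h$, an explicit nondegeneracy check ($\tau_\infty>0$, automatic for the RBF kernel with $\sigma_g^2>0$), and a complete first-order expansion. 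Including $v_g^2$ in $\eta$ is harmless, since it enters $\tau^2$ only through $v_g^2/n_g$, which is negligible at the $\sqrt{n_s}$ scale.
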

\noindent 
A detailed proof is included in Appendix A. At a high-level, the proof involves first expressing $p_0$ as a functional of the unknown parameters $\Gamma_g =(\beta_g, \sigma_g, \theta_g)'$, the basis components, and the true cumulative distribution functions (CDFs) of $S_{K+1}\supzero$ and  $S_{K+1}\supone$. We then prove consistent estimation and asymptotic normality of the estimators $\widehat{\Gamma}_g =(\widehat\beta_g, \widehat\sigma_g, \widehat\theta_g)'$ obtained using data from the $K$ studies via our proposed procedure. These asymptotic results, combined with the smoothness of the functional defining $p_0$, allow us  to establish consistency and asymptotic normality of $\widehat{p}$. 

\subsection{Uncertainty Quantification of $\widehat{p}$}
 To quantify uncertainty of our estimator, we focus on standard error estimation and confidence interval construction. Given the complex, but tractable, form of  $\sigma_P$ given in Appendix A, we offer two methods. The first is a fully nonparametric bootstrap which fully bootstraps the entire procedure and is, as expected, computationally intensive. Specifically, for a large number of iterations $R$, we resample $K$ studies with replacement and calculate a new estimate of $\widehat{\Gamma}_r$. We then resample $S_B$ (from Study $K+1$) with replacement, and calculate the bootstrap estimate of the resilience probability, denoted $\widehat{p}_r$. Finally, we take the empirical standard deviation of the $\widehat{p}_r$ estimates as an estimate of the standard error. We construct a 95\% confidence interval as the 2.5th and 97.5th percentiles of the bootstrap estimates. This method leverages the nonparametric bootstrap to target both the parameter uncertainty and variation arising from the randomness in $S_B$ at once. 

As an alternative to the fully nonparametric bootstrap, we propose a Partially Analytic Bootstrap (PAB) procedure. In Appendix A we show that the variance of the estimator can be expressed as the sum of two components, the first of which involves the variance from the MLE-estimated parameters and the second involves the randomness from the $S_B$ values. We propose to approximate the first component using the asymptotic covariance matrix of the MLE of $\Gamma_g$ and a gradient term evaluated at $\widehat{\Gamma}_g$. That is, we obtain the covariance matrix by calculating the numerical Hessian of the log-likelihood at $\widehat{\Gamma}_g$ using the \texttt{numDeriv} package and inverting it. We obtain the gradient term by way of numerical gradients with the \texttt{numDeriv} package. In essence, this gradient is calculating the approximate change in $\widehat{p}$ as $\widehat{\Gamma}_g$ changes. Applying the delta method, we obtain an approximation of the parameter uncertainty. For the second component, which involves the randomness from the $S_B$ values, we bootstrap the observed $S_B$. Specifically, we generate $R$ (for large $R$) independent samples of $S_B\supzero$ and $S_B\supone$ with replacement, calculate the value of $\widehat{p}_r$ which would have been obtained with the estimated $\widehat{\Gamma}$, and then calculate the empirical variance of the estimated $\{\widehat{p}_r, r=1,...,R\}$. Together, these two components provide an estimate of the standard error of $\widehat{p}$. We construct a 95\% confidence interval for $p_0$ using a normal approximation with this standard error estimate. 

As the fully nonparametric bootstrap is computationally intensive, this proposed PAB procedure offers significant computational advantages. However, there are four important disadvantages. First, the PAB inherently relies on correct model specification for accurate standard error estimation. Second, it relies on several approximations within the first component, namely, the MLE normality approximation, the two numerical gradient approximations, and the first-order Taylor expansion. The result is that if sufficient conditions are not met, this approach may yield slightly inaccurate results.  Third, the confidence interval relies on the asymptotic normal approximation, rather than percentiles as the fully nonparametric bootstrap utilizes. Lastly, it is prone to numerical errors, especially with the inversion of the information matrix. For implementation in our numerical studies, when the PAB method fails due to numerical errors, the nonparametric bootstrap method is used as a fallback for that particular iteration. In our numerical studies in Section \ref{sims}, we investigate and compare the finite sample performance of the fully nonparametric bootstrap and the PAB method procedure in terms of  accurate approximation of the empirical variance and coverage of the constructed confidence interval. 

\vspace*{-.2in}
\section{Simulation Study}\label{sims}
We evaluated our proposed method across six data generation settings with two configurations:  $K=10$ studies with $n=100$ per treatment group per study, and $K=25$ studies with $n=10$  per treatment group per study (the latter mirroring our data application). Settings~1-3 
were designed to assess whether the method correctly identifies high risk when the true probability of the 
paradox in Study $K+1$ is substantial. Settings 4-6 were designed to assess whether our proposed method can correctly result in a low estimated resilience probability when the probability of the paradox is indeed low.   For example, in the $K=10$ and $n=100$ sample size setting, the true paradox probabilities in Settings~1-3 were 0.586, 0.770, and 0.507, respectively, whereas the true paradox probabilities in Settings~4-6 were 0.029, 0.078, and 0, respectively. The  settings varied in function complexity (linear, polynomial, and trigonometric) and degree of  distributional shift between the completed studies and Study $K+1$. Detailed specifications of the mean functions, distributions, and 
Gaussian Process parameters are provided in Appendix B. 

Figure \ref{fig:high} visualizes the completed studies in Settings 1-3 when $K=10$ and the sample size per study and group is 100; for clarity, this figure shows data from 5 of the 10 completed studies with each study in a different color. In these figures, the black line represents the true mean function and the red line reflects the cubic B-splines estimated mean function based on the completed studies, as described above. Setting 1 is meant to be a simple linear model in which the surrogate paradox occurs because the distribution of the surrogate in the new study ($K+1$) is different, as described next. In this setting, in the prior studies, $S\supzero \sim N(3, 3)$ and $S\supone \sim N(4, 3)$ (top panel of Figure \ref{fig:high}), whereas in Study $K+1$, data were generated such that $S\supzero \sim N(4.75, 1)$ and $S\supone \sim N(5.25, 1)$.  In Setting 2,  in the completed studies, $S\supzero \sim N(0.9, 1.5)$ and $S\supone \sim N(2.2, 4.5)$ (middle panel of Figure \ref{fig:high}), whereas in Study $K+1$, data were generated such that $S\supzero \sim N(-0.7, 1)$ and $S\supone \sim N(-0.2, 2)$. Here, the control group mean function is non-linear; the spline estimation successfully captures this polynomial structure, as reflected by the close alignment between the red and black lines.  In Setting 3,  in the completed studies  $S\supzero \sim N(5,1)$ and $S\supone \sim N(6,2)$ (bottom panel of Figure \ref{fig:high}), whereas in Study $K+1$, data were generated such that  $S\supzero \sim N(4.1, 0.5)$ and $S\supone \sim N(4.1, 0.5)$. In this setting, the true mean function is non-polynomial with a higher signal-to-noise ratio than previous settings; nevertheless, the spline estimation approximates it reasonably well. Figure \ref{fig:low} visualizes the completed studies in Settings 4-6 when $K=10$ and the sample size per study and group is 100, similar to Figure \ref{fig:high}; see Appendix B for data generation details for these settings.

 In each iteration, we generated data and used our proposed method using linear, cubic, and cubic splines for the specified mean function to obtain $\widehat{p}$; results were summarized over 1000 iterations per simulation setting. We also compared our results with those from the maximum likelihood approach of \citet{elliott2015surrogacy}, detailed in Appendix B. Because the assumptions required by their method are not fully met in our simulation settings, this comparison is provided for illustrative purposes and should not be interpreted as a critique of their approach.
 
 Simulation results for all settings where $K=10$ and $n=100$, displayed in the top portion of Table \ref{sim_results}, show that the proposed method using cubic splines to estimate the resilience probability result in estimates close to the true probability of the surrogate paradox, both when this probability is high (Settings 1-3) and when this probability is low (Settings 4-6). The proposed method using a linear or cubic mean function performs well when the specification is correct, but poorly otherwise, as expected. The standard error estimates obtained from the fully nonparametric bootstrap are reasonably close to the empirical standard error and the procedure demonstrates acceptable coverage. The empirical standard deviation is generally higher for the cubic spline method compared to the linear method, which is not surprising given our allowed flexibility via splines and resulting number of parameters which must be estimated from the data. Of course, when we truly do not know the form of the true underlying function, such flexibility is important to consider. Simulation results for all settings where $K=25$ and $n=10$, displayed in the bottom portion of Table \ref{sim_results}, demonstrate similar good finite sample performance. Results using the PAB procedure are shown and discussed in Appendix B. 

\section{Data Application} \label{sec:application}
 We applied our proposed method to data from clinical trials in schizophrenia in which patients who were treated by 198 psychiatrists where randomly assigned to either the experimental treatment (risperidone) or an active control (haloperidol); this dataset, \texttt{Schizo}, is available in the \texttt{Surrogate} package on CRAN \citep{Surrogate}. We restricted our analysis to the 28 psychiatrists who treated at least 6 patients in each treatment group, and treat the psychiatrists as our ``studies". The primary outcome of interest was change in the Positive and Negative Syndrome Scale (PANSS) score, obtained from a standardized 30-item instrument used to rate the severity of symptoms in patients with schizophrenia \citep{kay1987positive}. The PANSS includes explicit positive, negative, and general psychopathology subscales and is considered the gold standard in schizophrenia clinical trials \citep{opler2017positive}.
Our surrogate marker of interest was change in the Brief Psychiatric Rating Scale (BPRS) score, obtained from an 18-item instrument \citep{overall1962brief}. The BPRS focuses on core psychopathology, including hallucinations, delusions, affect, and anxiety, and has a shorter administration time than the PANSS \citep{bell1992positive,kumari2017assessment}. This lower measurement burden motivates interest in the BPRS as a potential surrogate for PANSS \citep{alonso2002investigating,alonso2018maximum}. Both changes are calculated as the score at the start of treatment minus the score at the end of treatment. Higher values of the PANSS and BPRS scores indicate more severe symptoms; thus a positive change indicates a decrease in the score, and reflects a better outcome. That is, higher values of $Y$ and $S$ in this example are ``better".  

For the 28 studies (psychiatrists), both $Y$ and $S$ are measured. For our illustration, we pulled out two studies with ID 50 and ID 3  to illustrate as our $K+1$ study, resulting in $K=26$ prior studies. Study 50 was selected because the estimated treatment effect on BPRS was positive, while the estimated treatment effect on PANSS was negative i.e., the surrogate paradox was present. Study 3 was selected as a comparison; here the estimated treatment effect was positive for both BPRS and PANSS. Thus, we would at the very least hope that our proposed method would result in a \textit{higher} (more risky) probability of the surrogate paradox, the resilience probability, for Study 50 than for Study 3. 

Figure \ref{fig:data} shows the relationship between the BPRS score and the PANSS score in each of the 26 studies within the control group and the treated group.  We applied our proposed methods using cubic splines for the specified
mean function. The estimated resilience probability for Study 50 was 0.15 (SE$=0.345$), which was higher, as expected, than the estimated resilience probability, for Study 3 which was 0.02 (SE$=0.308$). Overall, our results demonstrate that the proposed method can appropriately differentiate studies with a positive treatment effect on both the surrogate and outcome from those with a positive treatment effect on the surrogate, but a negative effect on the outcome i.e., a surrogate paradox situation. In particular, the higher estimated resilience probability for Study 50 and the notably low probability for Study 3 align with the underlying treatment effect patterns and support the practical utility of the approach.

\vspace*{-.3in}
\section{Discussion}
\vspace*{-.1in}
In this work, we have proposed and empirically evaluated a functional class-based method to assess resilience to the surrogate paradox in a meta-analytic setting. Our method leverages data from $K$ completed studies in which the surrogate marker and primary outcome have been measured to make inference on a new study in which only the surrogate is measured. Unlike previous methods, we do not assume direct transportability of the conditional mean functions from the completed studies to the new study and instead, our approach considers deviations of functions from those observed in the completed studies to estimate the resilience probability. Our proposed methods are implemented in the R package \texttt{SurrogateParadoxTest} available at \url{https://github.com/emily13hsiao/SurrogateParadoxTest}.

Given that our proposed approach involves generating a synthetic distribution of treatment effects, it is worth discussing connections to longstanding research on synthetic data as well as more recent work on synthetic data generation using generative models. Synthetic data generation has long been an important area of statistical research, often motivated by the goal of producing privacy-preserving artificial datasets that approximate the joint distribution of the observed data \citep{raghunathan2021synthetic}. In parallel, developments in machine learning have produced increasingly flexible generative models capable of encoding complex dependence structures and using these representations to generate synthetic data \citep{harshvardhan2020comprehensive,zhou2025emerging}. These advances suggest interesting opportunities for incorporating data-adaptive representations of the relationships between treatment, surrogate, and outcome within surrogate research. For example, in our setting, generative modeling approaches could be used to learn flexible, joint representations of these variables that are compatible with the observed data from the $K$ prior studies, and then leverage these representations to generate synthetic data over which the surrogate paradox risk can be evaluated. More broadly, tools developed for synthetic data generation may be useful in future work to inform not only the assessment of surrogate paradox risk, but also possibly, the identification, validation, and principled use of surrogate markers in complex settings. 

In addition, an interesting direction for future work would be to extend our proposed framework to incorporate baseline covariates into the modeling of the conditional mean functions \citep{shafie2023incorporating}. Allowing these mean functions to depend on baseline covariates, potentially in study-specific ways, would make it possible to capture heterogeneity across studies in how the surrogate and outcome relationship may depend on patient characteristics. Such an extension could also be particularly valuable when baseline covariate information is available for the new study, as it would allow the method to assess how closely the new study resembles each of the prior studies in terms of its covariate distribution. This could further enable data-adaptive weighting of prior studies when constructing the resilience probability estimate, placing greater emphasis on those whose covariate–outcome relationships that are most compatible with the future study. While we do not pursue this extension here, incorporating covariate-dependent and study-specific structure into the resilience probability framework represents a promising avenue for better understanding transportability and robustness of surrogate-based conclusions across studies.


Our proposed method has some limitations. First, it is not guaranteed that the observed functions in each study are generated by a Gaussian Process as specified; in fact, this is a rather strong assumption. Furthermore, our method makes the assumption that the observed values of the surrogate are fixed in all prior studies, and that deviations in the potential $\Delta_{K+1}^*$ values are driven by randomness only in the generated functions, which may not necessarily be the case if some underlying mechanism causes the two quantities to vary jointly. Future work could involve including randomness in the values of $S$ to drive randomness in $\Delta_{K+1}$, as well as including covariates in the modeling of the GP mean function, as noted above, or weighting different studies based on distance of population covariates. 

A further limitation is that the proposed estimation procedure can exhibit numerical instability in finite samples. In some settings, near-singular covariance matrices or sensitivity to kernel hyperparameters can lead to instability in function estimation and, consequently, in the resulting resilience probability estimates. While we mitigate these issues through careful implementation, additional work on more stable estimation strategies or alternative function classes may be warranted. Finally, the computational burden of the method may be nontrivial when the number of studies or study sizes are large. The repeated fitting and simulation required to characterize the induced distribution of treatment effects can be expensive, potentially limiting scalability in high-dimensional or large-sample settings. Developing more efficient approximations or reduced-complexity representations that preserve the essential features of the functional uncertainty remains an important direction for future work.

In summary, this work introduces a flexible, functional-class framework for assessing resilience to the surrogate paradox that relaxes strong transportability assumptions commonly made in meta-analytic settings. By explicitly accounting for uncertainty in the relationship between the surrogate and the primary, our approach provides a principled and interpretable way to quantify surrogate paradox risk in future studies that intend to use the surrogate to test for a treatment effect. We view this framework as a useful contribution toward more robust and transparent use of surrogates for testing in future studies and a foundation for future methodological developments in this area.

\section*{Acknowledgments and Funding}
This work was supported by NIDDK grant R01DK118354 (PI:Parast).

\clearpage

\begin{figure}[htbp]
    \centering
           \includegraphics[width=0.8\linewidth]{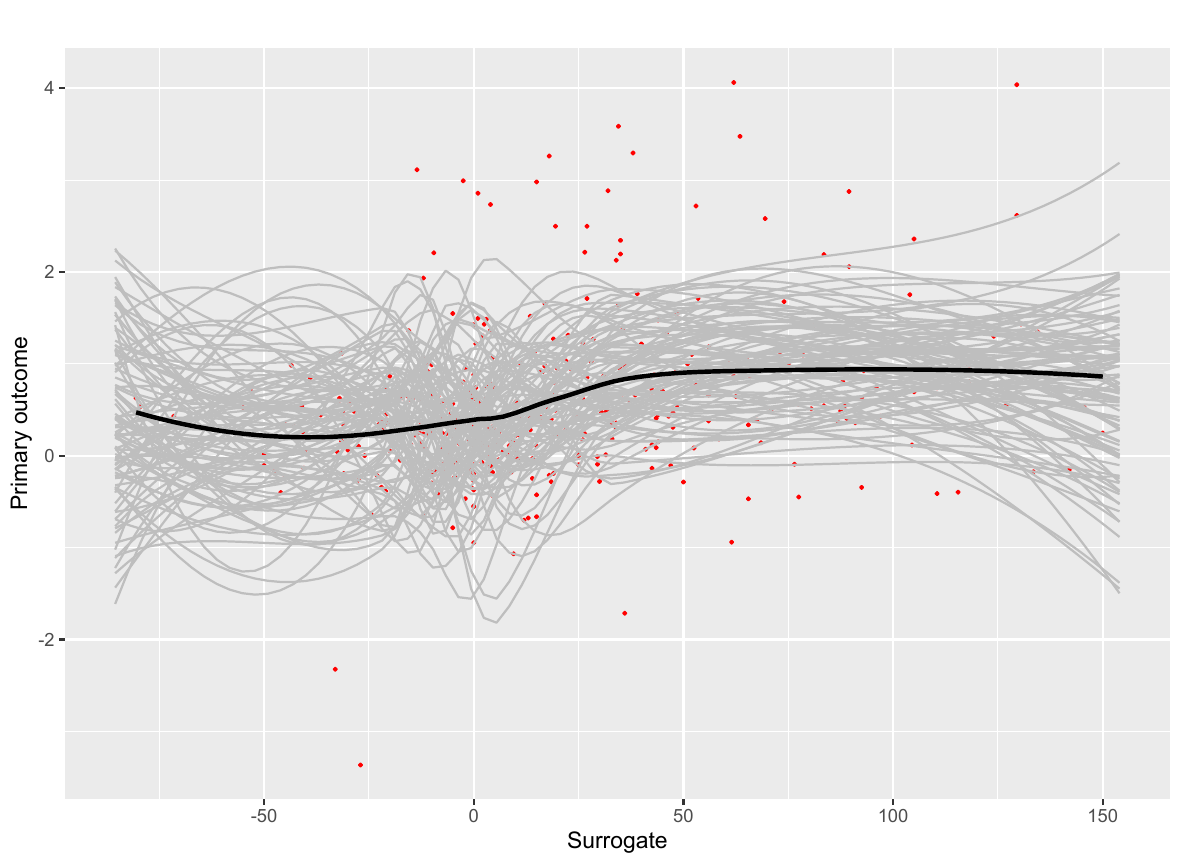}

     \caption{Functional class synthetic data generation; the red points indicate individual-level data, the black line reflects the estimated $m_g$, each grey line is a randomly generated conditional mean function }    \label{fig:idea}
\end{figure}

\clearpage \begin{figure}[htbp]
    \centering
           \includegraphics[width=0.8\linewidth]{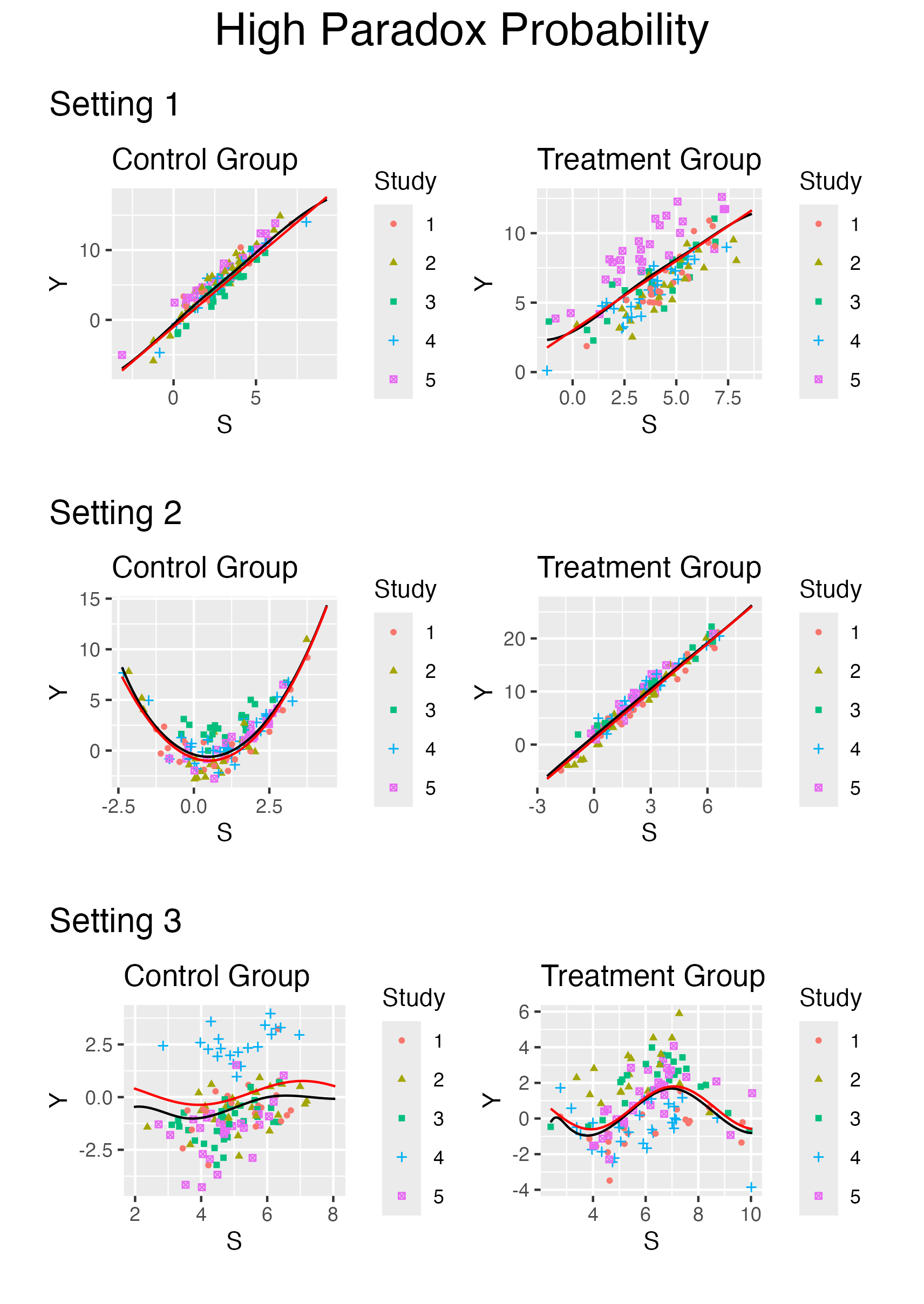}
          \label{fig:sub1}
     \caption{Illustration of simulation Settings 1-3 which were designed to have a high probability of the surrogate paradox in a new study; the subfigures show a scatterplot of the observed data from multiple studies with each study in a different color, the black line represents the true underlying mean function and the red line represents the cubic B-splines estimated mean function based on the studies observed;  in each setting, 10 studies of sample size 100 were generated, though the overlaid scatterplots show only 5 studies with 25 randomly sampled points for ease of visibility. }
    \label{fig:high}
\end{figure}

\begin{figure}[htbp]
    \centering
           \includegraphics[width=0.8\linewidth]{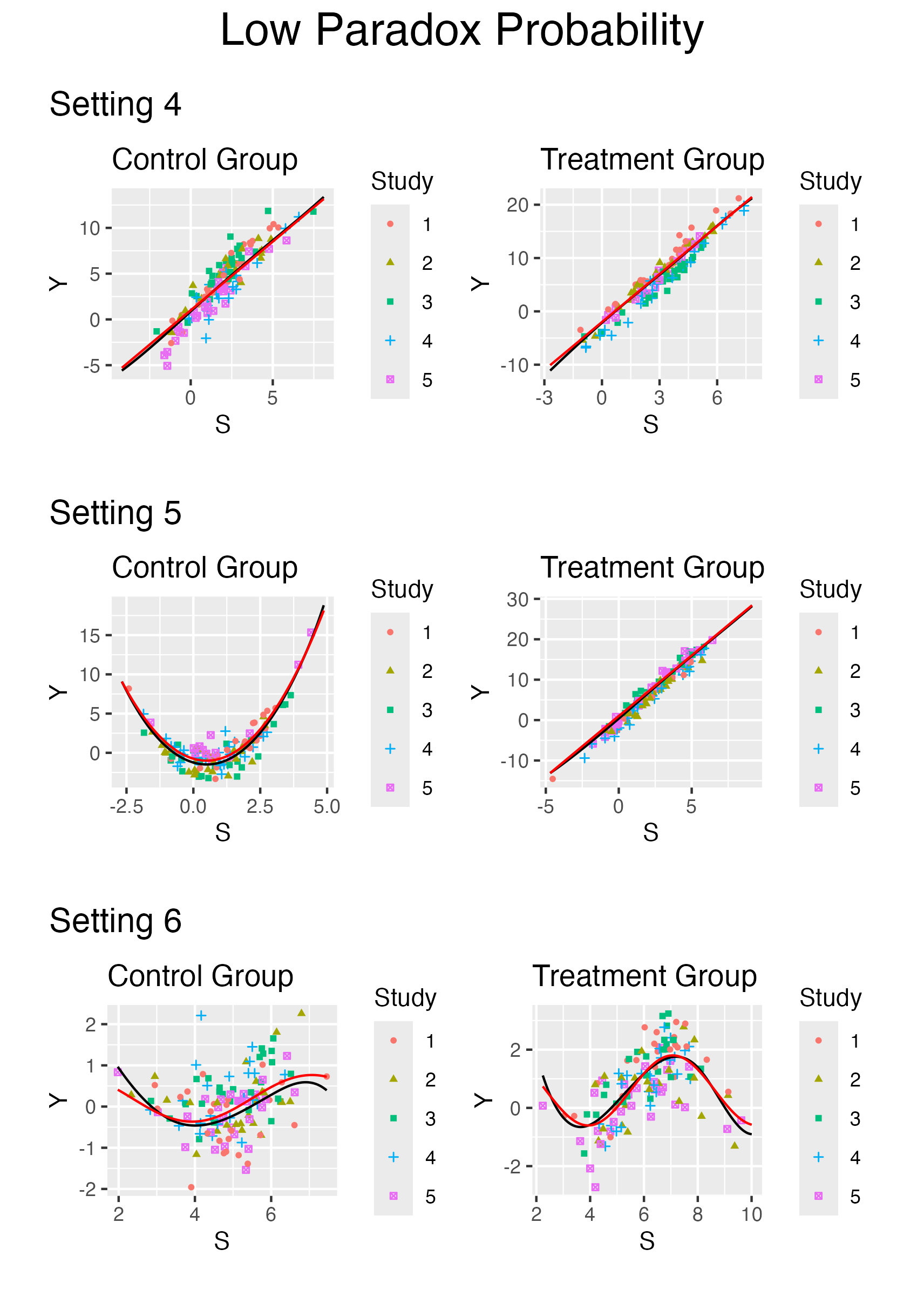}
          \label{fig:sub1}
     \caption{Illustration of simulation Settings 4-6 which were designed to have a low probability of the surrogate paradox in a new study; the subfigures show a scatterplot of the observed data from multiple studies with each study in a different color, the black line represents the true underlying mean function and the red line represents the cubic B-splines estimated mean function based on the studies observed;  in each setting, 10 studies of sample size 100 were generated, though the overlaid scatterplots show only 5 studies with 25 randomly sampled points for ease of visibility. }
    \label{fig:low}
\end{figure}

\clearpage

\begin{table}[H]
\centering
\small
\setlength{\tabcolsep}{3.5pt}

\begin{tabular}{l c| cccc | cccc | cccc | c}
\hline
\multicolumn{15}{c}{\textbf{$\mathbf{K=10}$, sample size per study and arm $\mathbf{=100}$}} \\
\hline
Setting & Truth
& \multicolumn{4}{c|}{Linear}
& \multicolumn{4}{c|}{Cubic}
& \multicolumn{4}{c}{Cubic Spline}
& \multicolumn{1}{|c}{MV} \\
\hline
 & & Est  & ESE & ASE & CP
   & Est  & ESE & ASE & CP
   & Est & ESE & ASE & CP
   & Est \\
\hline
1    &0.586&0.568&0.138&0.133&0.914&0.567&0.145&0.138&0.914&0.568&0.146&0.138&0.915&0.467\\ 
2    &0.77 &0.864&0.029&0.03 &0.167&0.757&0.156&0.144&0.925&0.757&0.156&0.143&0.924&0.367\\ 
3    &0.507&0.401&0.12 &0.108&0.762&0.426&0.14 &0.129&0.86 &0.481&0.139&0.128&0.885&0.486\\ 
4    &0.029&0.033&0.035&0.033&0.892&0.034&0.039&0.036&0.895&0.034&0.038&0.036&0.894&0.194\\ 
5    &0.078&0.838&0.033&0.036&0    &0.082&0.078&0.076&0.901&0.082&0.078&0.075&0.903&0.375\\ 
6    &0    &0.433&0.073&0.069&0    &0.069&0.07 &0.045&0.674&0.01 &0.013&0.013&0.980 &0.193\\ 
\hline
\multicolumn{15}{c}{\textbf{$\mathbf{K=25}$, sample size per study and arm $\mathbf{=10}$}} \\
\hline
Setting & Truth
& \multicolumn{4}{c|}{Linear}
& \multicolumn{4}{c|}{Cubic}
& \multicolumn{4}{c}{Cubic Spline}
& \multicolumn{1}{|c}{MV} \\
\hline
 &Truth & Est  & ESE & ASE & CP
   & Est  & ESE & ASE & CP
   & Est & ESE & ASE & CP
   & Est \\
\hline
1    &0.592&0.552&0.192&0.17 &0.906&0.554&0.193&0.172&0.922&0.552&0.192&0.173&0.922&0.306\\ 
2    &0.651&0.603&0.182&0.161&0.903&0.684&0.291&0.231&0.893&0.686&0.288&0.232&0.891&0.151\\ 
3    &0.502&0.392&0.094&0.09 &0.754&0.455&0.116&0.11 &0.94 &0.481&0.105&0.105&0.941&0.444\\ 
4    &0.076&0.071&0.105&0.1  &0.85 &0.071&0.106&0.1  &0.87 &0.071&0.106&0.101&0.858&0.143\\ 
5    &0.169&0.434&0.185&0.16 &0.697&0.169&0.214&0.182&0.884&0.169&0.213&0.183&0.878&0.143\\ 
6    &0.02 &0.203&0.079&0.072&0.083&0.056&0.082&0.093&0.968&0.023&0.03 &0.03 &0.948&0.147\\ 
\hline
\end{tabular}
\caption{Simulation results examining the bias and variance of estimated $\widehat{p}$ values for Settings 1-6, over 1000 iterations, with a sample size of 100 patients in each study in each group, and 10 studies (top portion) and with a sample size of 10 patients in each study in each group, and 25 studies (bottom portion). The first column represents the true probability of the paradox in the new study. The Est column is the average and the ESE column is the empirical standard deviation of the $\widehat{p}$ values obtained using our proposed method, using linear, cubic, or cubic splines, summarized over 1000 simulation iterations. The ASE column indicates the average standard error and CP indicates the coverage of the 95\% confidence intervals using the fully nonparametric bootstrap. Recall that in Settings 1 and 4, the conditional functions are truly linear; in Settings 2 and 5, the conditional functions are linear/quadratic; in Settings 3 and 6, the conditional functions are neither linear or cubic. The column MV compares to the method of \citet{elliott2015surrogacy}.} \label{sim_results}
\end{table}

\normalsize
\begin{figure}[htbp]
    \centering
           \includegraphics[width=0.8\linewidth]{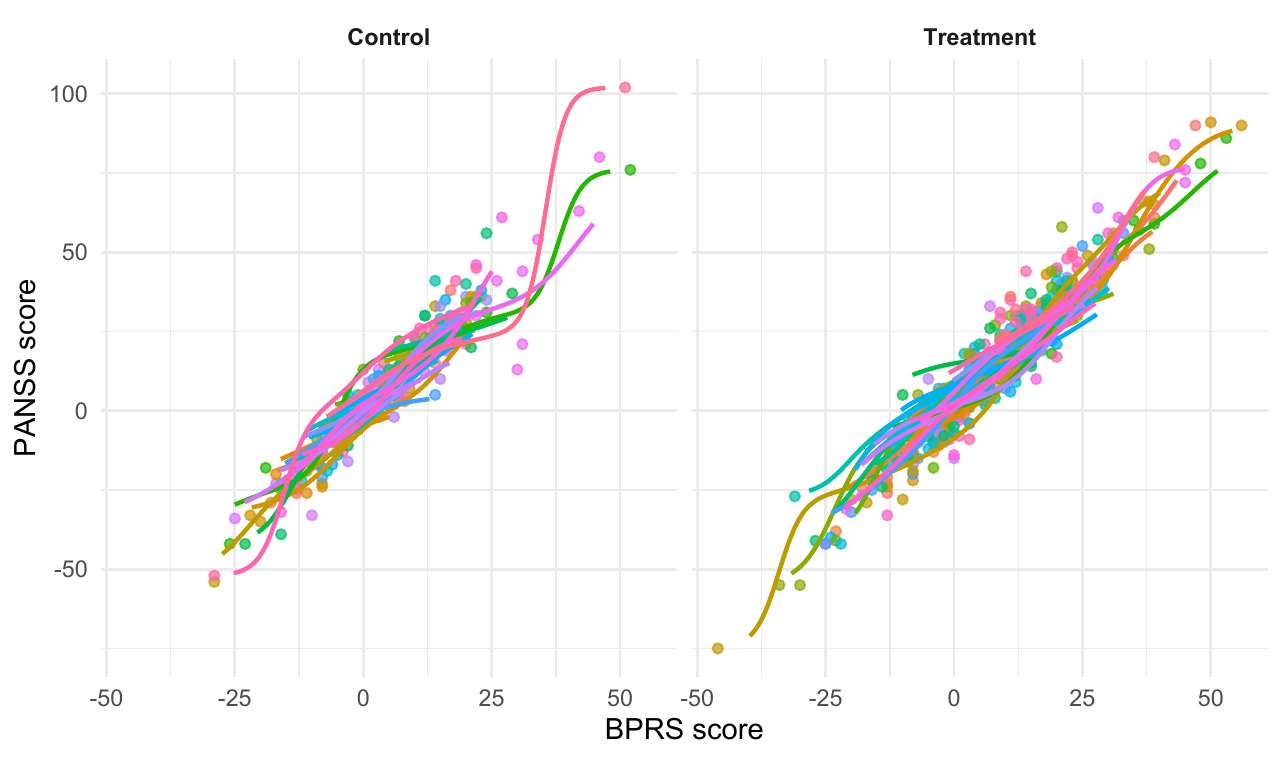}

     \caption{Relationship between the Brief Psychiatric Rating Scale (BPRS) score (surrogate) and Positive and Negative Syndrome Scale (PANSS) score (primary outcome) in each of the 26 studies (each color/curve is a different study) in the schizophrenia data application, in the control group (left side) and the treated group (right side) \\}    \label{fig:data}
\end{figure}

\def\bW{{\bf W}}
\def\bZ{{\bf Z}}
\def\bL{{\bf L}}
\def\bw{{\bf w}}
\def\supone{^{(1)}}
\def\supzero{^{(0)}}
\def\supg{^{(g)}}
\def\transpose{{\sf \scriptscriptstyle{T}}}
\def\bU{{\bf U}}
\def\bUhat{{\widehat \bU}}
\def\trans{^{\transpose}}
\def\var{\text{Var}}
\def\bbeta{\boldsymbol{\beta}}
\def\bomega{\boldsymbol{\omega}}
\def\bPi{\boldsymbol{\Pi}}
\def\calt{\mathcal{T}}
\setcitestyle{authoryear}
\def\bSigma{\boldsymbol{\Sigma}}
\def\bmu{\boldsymbol{\mu}}

\setlength\parindent{15pt}

\clearpage
\appendix

\setcounter{table}{0}
\renewcommand{\thetable}{A\arabic{table}}
\setcounter{figure}{0}
\renewcommand{\thefigure}{A\arabic{figure}}
\renewcommand{\theequation}{A.\arabic{equation}}

\allowdisplaybreaks

\section*{Appendix A}
 \setlength{\parindent}{0cm}
\textit{Proof of Theorem 1.}   As noted in the main text, the general steps of this proof will be to (1) express $p_0$ as a functional of unknown parameters $\Gamma_g$, basis components, and the true cumulative distribution functions (CDFs) of $S_{K+1}\supzero$ and   $S_{K+1}\supone$; (2) prove consistency and asymptotic normality of the estimates of the parameter $\Gamma_g$ using data from the $K$ studies via our proposed procedure; and (3) use these asymptotic results, combined with the smoothness of the functional defining $p_0$ to establish consistency and asymptotic normality of $\widehat{p}$. 

\textit{Step 1: Express $p_0$ as functional.} Considering the true conditional mean in each group $g=0,1$ in Study K+1, by Assumption \ref{asmp:FC}, 
$$E(Y\supg_{K+1} | S\supg_{K+1}=s) = \mu_{(K+1),g}(s) =  \beta_g' B(s) =  \sum_{l=1}^L \beta_{gl} B_l(s),$$
where $B(s)=(B_1(s),\ldots,B_L(s))'$ is a known basis (e.g., cubic splines in our implementation) and $L$ is the number of basis
components, and coefficients $\beta_g = (\beta_{g1},\ldots,\beta_{gL})'$. 
Thus, the resilience probability $p_0$ can be expressed as: 
\begin{align*}
 p_0=&P\left(\Delta_{K+1}<0\right)\\
 =&P\left(\int \mu_{K+1,1}(s)dF_{S_{K+1}^{(1)}}(s)-\int \mu_{K+1,0}(s)dF_{S_{K+1}^{(0)}}(s) <0\right)\\
 =&P\left(\int \sum_{l=1}^L \beta_{1l} B_l(s) dF_{S_{K+1}^{(1)}}(s)-\int \sum_{l=1}^L \beta_{0l} B_l(s) dF_{S_{K+1}^{(0)}}(s) <0\right)\\
 =&P\left( \sum_{l=1}^L \beta_{1l} \int B_l(s)dF_{S_{K+1}^{(1)}}(s)- \sum_{l=1}^L \beta_{0l}\int B_l(s)dF_{S_{K+1}^{(0)}}(s)<0\right) 
 \end{align*}
 Importantly, although $p_0$ can be expressed purely in terms of the coefficients $\beta_g$ (and basis components) as above, 
we estimate the coefficients jointly with the Gaussian Process covariance parameters $\phi_g=(\sigma_g^2,\theta_g)$ by maximum likelihood. Consequently, the sampling distribution of $\widehat p$ (and hence its asymptotic variance) will depend on both $\beta_g$ and $\phi_g$. Thus, we will make explicit our reliance on these parameters by defining $\Gamma_g=(\beta_g,\phi_g) = (\beta_g,\sigma_g^2,\theta_g)$ and expressing $p_0$ as
\begin{align*}
 p_0  =& h\left(\Gamma_0, \Gamma_1, F_{S_{K+1}^{(0)}},\,F_{S_{K+1}^{(1)}} \right)
 \end{align*}
where $h(\cdot)$ is a differentiable function. That is, $p_0$ is a functional that depends jointly on the basis coefficients \(\beta_g\), the covariance parameters \(\phi_g\) (through the GP specification), and the surrogate CDFs.

\textit{Step 2: Consistent estimation of $\Gamma_g$.} Recall from the main text that within each group we obtain $(\widehat\beta_g,\widehat\phi_g)$ by maximizing the log-likelihood pooled across the $K$ training studies.  Dropping the group subscript $g$ for clarity, let the per-study log-likelihood be
$$
\ell_k(\beta,\phi)
= -\tfrac{1}{2}\log\big|C_k(\phi)\big|
  -\tfrac{1}{2}(Y_k - B(S_k)\beta)' C_k(\phi)^{-1}(Y_k - B(S_k)\beta),
$$
and define the empirical objective
$$
M_K(\beta,\phi) \;=\; \frac{1}{K}\sum_{k=1}^K \ell_k(\beta,\phi).
$$
The MLE is then the M-estimator
$$
(\widehat\beta_K,\widehat\phi_K)
\;=\; \arg\max_{(\beta,\phi)\in\Theta} M_K(\beta,\phi),
$$
where $\Theta$ is the compact parameter space for $(\beta,\phi)$ (see Assumption \ref{asmp:regularity}). Let the population objective be the expectation of the per-study log-likelihood,
\[
M(\beta,\phi) \;=\; E\big[\ell_k(\beta,\phi)\big],
\]
where the expectation is taken with respect to the study-level data-generating mechanism, and suppose $M(\beta,\phi)$ has a unique maximizer $(\beta^*,\phi^*)$ on $\Theta$. We aim to show that under Assumptions \ref{asmp:ind}--\ref{asmp:covregularity} which include the MLE regularity conditions in Assumption \ref{asmp:regularity}, 
\begin{equation}
(\widehat\beta_K,\widehat\phi_K) \xrightarrow{p} (\beta^*,\phi^*)\quad\text{as }K\to\infty, \label{consistency}
\end{equation}
where $(\beta^*,\phi^*)=\arg\max_{(\beta,\phi)\in\Theta} M(\beta,\phi)$.  Since $(\widehat\beta_K,\widehat\phi_K)$ is an M-estimator, we can leverage established M-estimation theory \citep{van2000asymptotic} and claim consistency, if we can prove the following uniform convergence:

\begin{equation}
 \sup_{\beta, \phi} \Bigg| \sum_{k=1}^K \ell_k(\beta, \phi) - \sum_{k=1}^K E[\ell_k(\beta, \phi)] \Bigg| \overset{p}{\to} 0, \label{uniform}
\end{equation}
i.e., the observed log likelihood converges uniformly to the population log likelihood. To this end, we will use the Uniform Law of Large Numbers for non-iid data \citep{van1996weak}[Thm.\ 2.8.1] which states that (\ref{uniform}) holds if (a) the data from the $K$ studies are independent, (b) Uniform Lipschitz Continuity holds i.e., for all $(\beta,\phi),(\beta',\phi')\in\Theta$ and all $k$,
\begin{equation}
\big|\ell_k(\beta,\phi)-\ell_k(\beta',\phi')\big|
\;\le\;
L_k \,\|(\beta,\phi)-(\beta',\phi')\|, \label{lip_condition}
\end{equation}
for a Lipschitz constant $L_k$, and (c) the envelope condition holds, i.e., there exist measurable functions $G_k$ such that
\[
\ell_k(\beta,\phi)^2 \;\le\; G_k \qquad \text{for all }(\beta,\phi)\in\Theta,
\]
and
\[
\sup_{K}\frac{1}{K}\sum_{k=1}^K \mathbb{E}\big[ G_k \big] < \infty.
\]

Conditions (a) and (c) are satisfied by our stated assumptions in the main text; it remains to verify condition (b) of Lipschitz continuity. We decompose the first term of (\ref{lip_condition}) as: 
  \begin{eqnarray}
    \ell_k(\beta, \phi) - \ell_k(\beta', \phi')
    &=& -\frac12 \log|C_k(\phi)|
        -\tfrac{1}{2}(Y_k - B(S_k)\beta)' C_k(\phi)^{-1}(Y_k - B(S_k)\beta) \nonumber \\
    &&\quad + \frac12 \log|C_k(\phi')|
        +\tfrac{1}{2}(Y_k - B(S_k)\beta')' C_k(\phi')^{-1}(Y_k - B(S_k)\beta')\nonumber  \\
    &=& -\frac12\big( \log|C_k(\phi)| - \log|C_k(\phi')| \big) \label{logdet} \\
    &&\quad - \frac12\Big[
         (Y_k - B(S_k)\beta)' C_k(\phi)^{-1} (Y_k - B(S_k)\beta) \nonumber \\
         && \quad- (Y_k - B(S_k)\beta')' C_k(\phi')^{-1} (Y_k - B(S_k)\beta')
       \Big], \label{quad}
\end{eqnarray}
and first derive a Lipschitz bound for (\ref{logdet}) term. Because we intend to utilize the multivariate mean value theorem, we apply the Jacobi formula for the derivative of the determinant:
\[
\frac{d}{dt} \det A(t)
= \det(A(t)) \, \mathrm{tr}\!\left(A(t)^{-1} \frac{dA(t)}{dt}\right).
\]

to  \(C_k(\phi)\), to obtain
\begin{eqnarray*}
\frac{d}{d\phi_j} \log |C_k(\phi)|
&=& \frac{1}{|C_k(\phi)|} 
  \frac{d}{d\phi_j} |C_k(\phi)|\\
  &=& \frac{1}{|C_k(\phi)|} |C_k(\phi)| \mathrm{tr} 
  \left(C_k(\phi)^{-1} \frac{d C_k(\phi)}{d\phi_j}\right)\\
&=& \mathrm{tr}\!\left(C_k(\phi)^{-1} \frac{d C_k(\phi)}{d\phi_j}\right).
\end{eqnarray*}

Using this expression, the multivariate mean value theorem, and properties of matrix traces, it follows that 
\begin{align*}
    \log|C_k(\phi)| - \log|C_k(\phi')| 
    &= \sum_j  \mathrm{tr} \left( C_k(\tilde{\phi})^{-1} \frac{\partial C_k(\tilde{\phi})}{\partial \phi_j} (\phi_j - \phi_j') \right) \\
    &= \mathrm{tr} \left( \sum_j C_k(\tilde{\phi})^{-1} \frac{\partial C_k(\tilde{\phi})}{\partial \phi_j} (\phi_j - \phi_j') \right) \\
    &= \mathrm{tr} \left( C_k(\tilde{\phi})^{-1} \sum_j \frac{\partial C_k(\tilde{\phi})}{\partial \phi_j} (\phi_j - \phi_j')\right)
\end{align*}
for some $\tilde{\phi}$ on the line segment between $\phi$  and $\phi'$. Next, we use the von Neumann trace-norm inequality together with the Cauchy-Schwarz inequality to state that:

\[
|\mathrm{tr}(AB)|
\le \sum_{i=1}^n \alpha_i \beta_i.
\le 
\left( \sum_{i=1}^n \alpha_i^2 \right)^{1/2}
\left( \sum_{i=1}^n \beta_i^2 \right)^{1/2}
= \|A\|_F \, \|B\|_F.
\]

where $\|A\|_F$ denotes the Frobenius norm of $A$ and thus,

\[
|\mathrm{tr}(AB)| \le \|A\|_F \, \|B\|_F.
\]

Applying this inequality, it follows that:
\begin{eqnarray*}
\Big |\log|C_k(\phi)| - \log|C_k(\phi')| \Big | &=& 
\Big | \mathrm{tr} \left( C_k(\tilde{\phi})^{-1} \sum_j \frac{\partial C_k(\tilde{\phi})}{\partial \phi_j} (\phi_j - \phi_j')\right)\Big | \\
&=& 
\leq \norm{C_k(\tilde{\phi})^{-1}}_F  \norm{\sum_j \frac{\partial C_k(\tilde{\phi})}{\partial \phi_j} (\phi_j - \phi_j')}_F
\end{eqnarray*}
Now we return to our eigenvalue bound assumption within Assumption 5, which states that all eigenvalues of \(C_k(\phi)\) are bounded above and below by $\lambda_{min}>0$ and $\lambda_{max}<\infty$. Therefore, $$\norm{C_k(\tilde{\phi})^{-1}}_F \leq \frac{\sqrt{n_k}}{\lambda_{min}}.$$

Next, we address the term: $$\norm{\sum_j(\phi_j - \phi_j') \frac{\partial C_k(\tilde{\phi})}{\partial \phi_j}}_F.$$ Define $M_k = \sup_{\theta} \norm{\frac{\partial C_k(\phi)}{\partial \theta_j}} < \infty$ which exists by assuming that $C_k(\phi)$ is continuously differentiable on a compact parameter space (within Assumption 5). By properties of the Frobenius norm, we have that 
\begin{align*}
    \norm{\sum_j(\phi_j - \phi_j') \frac{\partial C_k(\tilde{\phi})}{\partial \phi_j}}_F 
    &\leq \sum_j |\phi_j - \phi_j'| \norm{\frac{\partial C_k(\tilde{\phi})}{\partial \phi_j}}_F \\
    &\leq M_k  \sqrt{l}  \norm{\phi-\phi'}_2
\end{align*}
where $l$ is the length of $\phi$. Putting these two together, it follows that
$$
\Big |\log | C_k(\phi)| - \log | C_k(\phi')| \Big | \leq \frac{\sqrt{n_k}}{\lambda_{min}}  M_k  \sqrt{l} \norm{\phi - \phi'}_2
$$
giving a Lipschitz bound for the (\ref{logdet}) term with $L_k = \frac{M_k \sqrt{n_kl}}{\lambda_{min}}$.

Next, we derive a Lipschitz bound for the (\ref{quad}) term:
$$ 
         (Y_k - B(S_k)\beta)' C_k(\phi)^{-1} (Y_k - B(S_k)\beta) - (Y_k - B(S_k)\beta')' C_k(\phi')^{-1} (Y_k - B(S_k)\beta'),
       $$
dropping the constant $- \frac12$ for simplicity. We define:
\[
r_k(\beta) = Y_k - B(S_k) \beta, \qquad 
A_k(\phi) = C_k(\phi)^{-1},
\]
\[
Q_k(\beta,\phi)
    = r_k(\beta)^{\top} A_k(\phi)\, r_k(\beta),
\]
and we aim to bound:
\[
\Delta Q_k 
    \;=\; Q_k(\beta,\phi) - Q_k(\beta',\phi').
\]
which we can re-write as
\begin{eqnarray*}
\Delta Q_k
    &=& r^\top A r - r'^{\top} A' r'\\
    &=& r^\top (A-A') r
    \;+\; (r-r')^\top A' r
    \;+\; (r')^\top A' (r-r'),
\end{eqnarray*}
where \(r=r_k(\beta)\), \(r'=r_k(\beta')\), \(A=A_k(\theta)\), and \(A'=A_k(\theta')\), and the second equality follows from some algebra. It follows that 
\begin{eqnarray*}
|\Delta Q_k|
&\le& 
\bigl|r^\top (A-A') r\bigr|
+ \bigl|(r-r')^\top A' r\bigr|
+ \bigl|(r')^\top A' (r-r')\bigr|\\
&\le&
\|A - A'\|_2 \, \|r\|_2^2
\;+\;
\|A'\|_2 \, \|r-r'\|_2 \, \|r\|_2
\;+\;
\|A'\|_2 \, \|r'\|_2 \, \|r-r'\|_2\\
&=&
\|A - A'\|_2 \, \|r\|_2^2
+
\,\|A'\|_2 \,\|r-r'\|_2 (\|r\|_2 + \|r'\|_2) \
\end{eqnarray*}

where the second inequality follows from the matrix norm bound  
\[
|x^\top A y| \le \|A\|_2 \,\|x\|_2 \,\|y\|_2,
\]
and the third equality involves combining terms. Examining $\norm{A-A'}_2$, since
$$ A-A'
    = C_k(\phi)^{-1} - C_k(\phi')^{-1} = C_k(\phi)^{-1}[C_k(\phi') - C_k(\phi)] C_k(\phi')^{-1}
$$
it follows that,
\begin{align*}
  \norm{A-A'}_2   
    &\leq \norm{C_k(\phi)^{-1}}_2  \norm{C_k(\phi')-C_k(\phi)}_2  \norm{C_k(\phi')^{-1}} \\
    &\leq \left(\frac{1}{\lambda_{min}}\right)^2 L_{\phi} \norm{\phi-\phi'}_2 = \frac{L_{\phi}}{\lambda_{min}^2} \norm{\phi-\phi'}_2
\end{align*}
by the assumptions of boundedness of the eigenvalues of $C_k$  and $C_k(\phi)$ being Lipschitz in $\phi$, within Assumption 5, and where \(L_\phi\) is the corresponding Lipschitz constant of \(C_k(\phi)\) in \(\phi\).  Examining $\norm{r - r'}_2$, it follows that 
\begin{align*}
    \norm{r - r'}_2 
    &= \norm{Y_k - B(S_k)\beta - (Y_k - B(S_k)\beta')}_2 \\
    &= \norm{B(S_k)(\beta - \beta')}_2 \\
    &\leq \norm{B(S_k)}_2  \norm{\beta - \beta'}_2 \\
    &\leq B^* \norm{\beta - \beta'}_2\\
\end{align*}
where the last line follows from Assumption 4 which includes $\norm{B(S_k)}_2 \leq B^*$. Examining $\norm{r}_2 + \norm{r'}_2$,
\begin{align*}
    \norm{r}_2 + \norm{r'}_2
    &= \norm{Y_k - B(S_k)\beta} + \norm{Y_k - B(S_k) \beta'} \\
    &\leq \norm{Y_k}_2 + \norm{B(S_k)}_2  \norm{\beta}_2 + \norm{Y_k} + \norm{B(S_k)}_2 \norm{\beta'}_2 \\
    &= 2 \norm{Y_k}_2 + B^*  (\norm{\beta}_2 + \norm{\beta'}_2) 
\end{align*}
under Assumptions 4 and 5. In addition, since the eigenvalues of \(C_k(\phi)\) are bounded below by \(\lambda_{\min}>0\),
\[
\|A'\|_2
=
\|C_k(\phi')^{-1}\|_2
\le
\frac{1}{\lambda_{\min}}.
\]
Examining $\norm{r}_2^2$, using similar arguments as above, it follows that
\begin{align*}
    \norm{r}_2^2 
    &= \norm{Y_k - B(S_k)\beta}_2^2 \leq \norm{Y_k}_2^2 + 2B^*\norm{\beta}_2\norm{Y_k}_2 + (B^*\norm{\beta}_2)^2 
\end{align*}

 Finally, putting together all bounds, it follows that: 
\begin{align*}
    |\Delta Q_k| 
    &\leq \norm{A-A'}_2 \norm{r}_2^2 + \norm{A'}_2 \norm{r-r'}_2 (\norm{r}_2 + \norm{r'}_2) \\
    &\leq \left[ \frac{L_{\phi}}{\lambda_{\min}^2} \left( \norm{Y_k}_2^2 + 2B^*\norm{Y_k}_2\norm{\beta}_2 + (B^*\norm{\beta}_2)^2 \right) \right] \norm{\phi-\phi'}_2 \\
    &\quad + \left[ \frac{B^*}{\lambda_{\min}} \left( 2\norm{Y_k}_2 + B^*(\norm{\beta}_2 + \norm{\beta'}_2) \right) \right] \norm{\beta-\beta'}_2.
\end{align*} 
Since the parameters $(\beta, \phi)$ are assumed to lie in a compact space (Assumption 4), there exists a constant $R < \infty$ such that $\norm{\beta}_2 \leq R$ and $\norm{\phi}_2 \leq R$. Substituting this into the coefficients, we obtain
\begin{align*}
    |\Delta Q_k| &\leq L_1(Y_k) \norm{\phi-\phi'}_2 + L_2(Y_k) \norm{\beta-\beta'}_2 \\
    &\leq \mathcal{L}(Y_k) \norm{(\beta, \phi) - (\beta', \phi')}_2,
\end{align*}
where the stochastic Lipschitz coefficient $\mathcal{L}(Y_k)$ is a polynomial in $\norm{Y_k}_2$ of degree at most 2:
\[
\mathcal{L}(Y_k) = \sqrt{2} \max \left\{ \frac{L_{\phi}}{\lambda_{\min}^2} (\norm{Y_k}_2^2 + 2B^*R\norm{Y_k}_2 + (B^*R)^2), \frac{B^*}{\lambda_{\min}} (2\norm{Y_k}_2 + 2B^*R) \right\}.
\]
Under the assumption that $Y$ has a finite second moment (Assumption 4), $E[\norm{Y_k}_2^2] < \infty$, it follows that $E[\mathcal{L}(Y_k)] < \infty$. Thus, $Q_k(\beta, \phi)$ is Lipschitz continuous in the parameters with an integrable Lipschitz constant. Thus, we have shown that this quadratic term is also Lipschitz in the parameters $\beta$ and $\phi$, establishing that (\ref{lip_condition}) holds. It then follows that the needed uniform convergence in (\ref{uniform}) holds, and thus the consistency of our estimates stated in (\ref{consistency}) holds. Furthermore, under the MLE regularity conditions in Assumption \ref{asmp:regularity} and the results above, we have
\begin{equation}
\sqrt{K}\,\big(\widehat\Gamma_K - \Gamma^*\big)\ \xrightarrow{d}\ N\big(0,\Sigma_{\Gamma}\big),
\label{gamma_normal}
\end{equation}
where $\widehat\Gamma_K=(\widehat\beta_K,\widehat\phi_K)'$, $\Gamma^*=(\beta^*,\phi^*)'$, and $\Sigma_{\Gamma}$ is the asymptotic covariance matrix of the MLE, $\Sigma_{\Gamma} \;=\; A^{-1} B A^{-1},$
where, with a slight abuse of notation, we re-define $A$ and $B$ as:
$$
A \;=\; -\mathbb{E}\!\left[\nabla^2_{\Gamma}\,\ell_k(\Gamma^*)\right], 
\qquad
B \;=\; \mathbb{E}\!\left[\nabla_{\Gamma}\ell_k(\Gamma^*)\,\nabla_{\Gamma}\ell_k(\Gamma^*)^\top\right].
$$

\textit{Step 3: Properties of $\widehat{p}$.} Now that we have stated properties of the parameter estimates for each group, with a slight abuse of notation, we will re-introduce the subscript $g$ to denote the group $g=0,1$. From the above steps, we have that \begin{align*}
 p_0  =& h\left(\Gamma_0^*, \Gamma_1^*, F_{S_{K+1}^{(0)}},\,F_{S_{K+1}^{(1)}} \right)
 \end{align*}
where $\Gamma_g^* = (\beta_g^*, \phi_g^*)$ and $h(\cdot)$ is a differentiable function, and by construction 
\begin{align*}
 \widehat{p}  =& h\left(\widehat{\Gamma}_0, \widehat \Gamma_1, \widehat F_{S_{K+1}^{(0)}},\,\widehat F_{S_{K+1}^{(1)}} \right)
 \end{align*}
 where $\widehat F_{S_{K+1}^{(g)}}$ is the empirical CDF. Since $\widehat{\Gamma}_0 \rightarrow \Gamma_0^*$, $\widehat \Gamma_1 \rightarrow \Gamma_1^*$, and $\widehat F_{S_{K+1}^{(g)}}$ is a consistent estimator of the true CDF for $g=0,1$, it follows that 

$$  \widehat{p}  = h\left(\widehat{\Gamma}_0, \widehat \Gamma_1, \widehat F_{S_{K+1}^{(0)}},\,\widehat F_{S_{K+1}^{(1)}} \right) \rightarrow  h\left(\Gamma_0, \Gamma_1, F_{S_{K+1}^{(0)}},\,F_{S_{K+1}^{(1)}} \right) = p_0.$$
  
Next, we address asymptotic normality. Let $n_s = [n_{(K+1),1}n_{(K+1),0}]/n_{K+1}$ and let $\mathbf{u}(\widehat{F})$ and $\mathbf{u}(F)$ denote the vector of integrals from Step 1: $$\mathbf{u}(\widehat{F}) = \left (\int B_1(s)d\widehat{F}_{S_{K+1}^{(1)}}(s), ..., \int B_L(s)d\widehat{F}_{S_{K+1}^{(0)}}(s)\right )',$$ $$\mathbf{u}(F) =\left ( \int B_1(s)dF_{S_{K+1}^{(1)}}(s),..., \int B_L(s)dF_{S_{K+1}^{(0)}}(s)\right )'.$$ Under our stated assumptions and standard empirical process theory, 
$$\sqrt{n_s}\left ( \mathbf{u}(\widehat{F}) - \mathbf{u}(F)\right ) \xrightarrow{d} N(0,\Upsilon),$$
 for some asymptotic covariance matrix $\Upsilon$ that depends on the variability in the empirical CDFs and the basis functions $\{B_l\}$.  Next, we re-parameterize $h$ as $$H(\Gamma, \mathbf{u}(F)) = h\left(\Gamma_0^*, \Gamma_1^*, F_{S_{K+1}^{(0)}},\,F_{S_{K+1}^{(1)}} \right),$$ use the following to denote needed gradients:
 $$D_{\Gamma} = \frac{\partial H(\Gamma, \mathbf{u}(F)) }{\partial \Gamma}\bigg|_{(\Gamma^*, \mathbf{u}(F))}  \quad \mbox{and} \quad D_{\mathbf{u}}= \frac{\partial H(\Gamma, \mathbf{u}(F)) }{\partial \mathbf{u}}\bigg|_{(\Gamma^*, \mathbf{u}(F))}$$ and leverage the fact that the data used to estimate $\widehat{\Gamma}$ (the K prior studies) and the data used to estimate $\widehat{F}$ (the surrogate marker data from Study $K+1$) are independent, to apply a first-order Taylor expansion of $H$ about the true value $(\Gamma^*,F)$, 
 and use our stated properties to claim that:
\begin{eqnarray*}
\sqrt{n_s}(\widehat p - p_0) &=& \sqrt{n_s} D_{\Gamma} \left ( \widehat \Gamma -\Gamma^*\right) + \sqrt{n_s} D_{\mathbf{u}} \left ( \mathbf{u}(F) -\mathbf{u}(\widehat F)\right)\\
&=& \frac{\sqrt{n_s}}{\sqrt{K}} D_{\Gamma} \sqrt{K}\left ( \widehat \Gamma -\Gamma^*\right) + D_{\mathbf{u}} \sqrt{n_s} \left ( \mathbf{u}(F) -\mathbf{u}(\widehat F)\right)\\
&\xrightarrow{d}& N(0, \kappa D_{\Gamma}\Sigma_{\Gamma} D_{\Gamma}' + D_{\mathbf{u}} \Upsilon D_{\mathbf{u}}')
\end{eqnarray*}

where $\kappa = n_s/K$ such that $$\sigma_P^2 = \kappa D_{\Gamma}\Sigma_{\Gamma} D_{\Gamma}' + D_{\mathbf{u}} \Upsilon D_{\mathbf{u}}'.$$

\section*{Appendix B}
\setlength\parindent{15pt} 
\subsection*{Simulation Data Generation Details}
In Setting 1, the true mean functions were $m_0(s) = 2s-1$ and $m_1(s) = s + 3$, the Gaussian Process parameters were $\theta = 5, \sigma^2 = 1$, and $v^2 = 1$, and in the prior studies, $S\supzero \sim N(3, 3)$ and $S\supone \sim N(4, 3)$. In contrast, in Study $K+1$, data were generated such that $S\supzero \sim N(4.75, 1)$ and $S\supone \sim N(5.25, 1)$.  In Setting 2, the true mean functions used were $m_0(s) = (s - 0.5)^2 - 1$ and $m_1(s) = 3s + 1$, the Gaussian Process parameters were $\theta = 5$, $\sigma^2 = 1$, and $v^2 = 1$, and in the completed studies, $S\supzero \sim N(0.9, 1.5)$ and $S\supone \sim N(2.2, 4.5)$. In Study $K+1$, data were generated such that $S\supzero \sim N(-0.7, 1)$ and $S\supone \sim N(-0.2, 2)$.  In Setting 3, the true mean functions used were $m_0(s) = 0.2 + 0.4\sin(s) + 0.4\cos(s)$ and $m_1(s) = 0.6 + 0.85\sin(s) + 0.85\cos(s)$, the Gaussian Process parameters were $\theta = 5$, $\sigma^2 = 1$, and $v^2 = 1$, and in the completed studies  $S\supzero \sim N(5,1)$ and $S\supone \sim N(6,2)$. In Study $K+1$, data were generated such that  $S\supzero \sim N(4.1, 0.5)$ and $S\supone \sim N(4.1, 0.5)$. In this setting, the surrogate paradox was

 Setting 4, similar to Setting 1, reflects a simple linear model in which the distribution of the surrogate changes only slightly in the later study. In Setting 4, the true mean functions used were $m_0(s) = 1.5s + 1$ and $m_1(s) = 3s-2$, the Gaussian Process parameters were $\theta = 5$, $\sigma^2 = 1$, and $v^2 = 1$, and in the completed studies, $S\supzero \sim N(2,3)$ and $S\supone \sim N(3,3)$. In Study $K+1$, data were generated as $S\supzero \sim N(1.75, 1)$ and $S\supone \sim N(2.75, 1)$. In Setting 5, the true mean functions used were $m_0(s) = (s - 0.5)^2 - 1$ and $m_1(s) = 3s + 1$, the Gaussian Process parameters were $\theta = 5$, $\sigma^2 = 1$, and $v^2 = 1$, and in the completed studies, $S\supzero \sim N(0.9, 1.5)$ and $S\supone \sim N(2.2, 4.5)$. In Study $K+1$, data were generated as $S\supzero \sim N(-0.08, 1)$ and $S\supone \sim N(0.45, 2)$. In Setting 6, the true mean functions used were $m_0(s) = 0.2 + 0.4\sin(s) + 0.5\cos(s)$ and $m_1(s) = 0.6 + 0.85\sin(s) + 0.85\cos(s)$, the Gaussian Process parameters were $\theta = 5$, $\sigma^2 = 0.1$, and $v^2 = 0.5$, and in the completed studies, $S\supzero \sim N(5, 1)$ and $S\supone \sim N(6, 2)$.  In Study $K+1$, data were generated as $S\supzero \sim N(5.5, 0.5)$ and $S\supone \sim N(6.5, 0.5)$.  

\subsection*{Comparison to Elliott et al. (2015)}
In this section, we detail the MLE approach of \citet{elliott2015surrogacy} which is implemented in our simulation study for comparison. \citet{elliott2015surrogacy} specifies the following models:
\begin{eqnarray*}
S_{i0k} &=& \alpha_S + a_{S_k} + \epsilon_{S_{ki}} \\
S_{i1k} &=& \alpha_S + \beta_S + a_{S_k} + b_{S_k} + \epsilon_{S_{ki}} \\
Y_{i0k} &=& \alpha_Y + a_{Y_k} + \epsilon_{Y_{ki}} \\
Y_{i1k} &=& \alpha_Y + \beta_Y + a_{Y_k} + b_{Y_k} + \epsilon_{Y_{ki}} \\
\end{eqnarray*}
where
\begin{equation*}
    \begin{pmatrix}
\epsilon_{S_{ki}} \\
\epsilon_{Y_{ki}}
\end{pmatrix}
\sim
\mbox{BVN}\!\left(
\begin{pmatrix}
0 \\
0
\end{pmatrix},
\begin{pmatrix}
\sigma_{ss} & \sigma_{sy} \\
\sigma_{sy} & \sigma_{yy}
\end{pmatrix}
\right).
\end{equation*}
and the random trial-level effects are:
\begin{equation*}
    \begin{pmatrix}
a_{S_k} \\
a_{Y_k} \\ 
b_{S_k} \\
b_{Y_k} \\ 
\end{pmatrix}
\sim
\mbox{MVN}\!\left(
\begin{pmatrix}
0 \\
0\\
0\\
0
\end{pmatrix},
\begin{pmatrix}
d_{ss} & d_{sy} & d_{sa} & d_{sb}\\
 & d_{yy} & d_{ya} & d_{yb}\\
 &  & d_{aa} & d_{ab}\\
 &  &  & d_{bb}
\end{pmatrix}
\right).
\end{equation*}
and it follows that 
\begin{equation*}
    \begin{pmatrix}
\Delta_{Sk} \\
\Delta_{Yk}
\end{pmatrix}
\sim
\mbox{BVN}\!\left(
\begin{pmatrix}
\beta_S\\
\beta_Y
\end{pmatrix},
\begin{pmatrix}
d_{aa} & d_{ab} \\
d_{ab} & d_{bb}
\end{pmatrix}
\right).
\end{equation*}
where $\Delta_{Sk} = E(S_k\supone - S_k\supzero)$ and $\Delta_{Yk} = E(Y_k\supone - Y_k\supzero)$. The parameters $\beta_S,\beta_Y, d_{aa}, d_{ab}, d_{bb}$ can be estimated via maximum likelihood using the \texttt{mvmeta} package for multivariate meta-regression, and we denote these as $\widehat \beta_S,\widehat \beta_Y, \widehat d_{aa}, \widehat d_{ab}, \widehat d_{bb}$. 

\cite{elliott2015surrogacy} define a number of useful quantities relating to the surrogate paradox; however, the quantity that is most comparable to our defined resilience probability is:
\begin{eqnarray*}
p_e &=& P\left (\Delta_{Y(K+1)} < 0 \bigg | \Delta_{S(K+1)} \right ) \\
&=& \Phi\left(
0;\,
\beta_Y + \frac{d_{ab}}{d_{aa}}\left(\Delta_{S(K+1)} - \beta_S\right),\,
d_{bb} - \frac{d_{ab}^2}{d_{aa}} \right )
\end{eqnarray*}
where $\Phi\left (0; \mu, \sigma^2  \right )$ denotes the cumulative distribution function of a normal random variable with mean $\mu$ and variance $\sigma^2$. Thus, we estimate this quantity as:
\begin{eqnarray*}
\widehat{p}_e &=& \Phi\left(
0;\,
\widehat{\beta}_Y + \frac{\widehat{d}_{ab}}{\widehat{d}_{aa}}\left(\widehat{\Delta}_{S(K+1)} - \widehat\beta_S\right),\,
\widehat d_{bb} - \frac{\widehat d_{ab}^2}{ \widehat d_{aa}} \right )
\end{eqnarray*}
where $\widehat{\Delta}_{S(K+1)} = \sum_{i=1}^{n_{K+1,1}} S_{i1(K+1)} -  \sum_{j=1}^{n_{K+1,0}} S_{j0(K+1)} $.  

Importantly, while some assumptions underlying the framework of \citet{elliott2015surrogacy} align with ours, the fact that we do not generate the surrogate values according to the same distribution in Study $K+1$ means that the probabilities obtained from the \citet{elliott2015surrogacy} and our procedure should not be interpreted as equivalent quantities. Nevertheless, we include this comparison as a benchmark, as it provides insight into the behavior of the estimator of \citet{elliott2015surrogacy} under a data-generating mechanism that departs from its underlying assumptions, and because it represents the closest available comparison to existing methods.
\subsection*{Additional Simulation Results}

 Simulation results for all settings where $K=10$ and $n=100$ using the PAB procedure are displayed in Table \ref{sim_results_PAB}. Overall, the PAB procedure performs reasonably well across most settings, though it tends to produce confidence intervals with below-nominal coverage in several cases, most notably in Settings 5 and 6 under linear mean specification (where the linear mean specification is incorrect). This is consistent with the limitations of the PAB approach, stated in the main text: it relies on several approximations and requires correct specification of the mean model, making it more sensitive to model misspecification than the nonparametric bootstrap. Accordingly, the nonparametric bootstrap-based results in the main text generally exhibit better coverage and thus, though more computationally expensive, we suggest the nonparametric bootstrap in practice. 
 \clearpage
\begin{table}[H]
\centering
\small
\setlength{\tabcolsep}{3.5pt}

\begin{tabular}{l c| cccc | cccc | cccc }
\hline
\multicolumn{14}{c}{\textbf{$\mathbf{K=10}$, sample size per study and arm $\mathbf{=100}$, PAB Procedure}} \\
\hline
Setting & Truth
& \multicolumn{4}{c|}{Linear}
& \multicolumn{4}{c|}{Cubic}
& \multicolumn{4}{c}{Cubic Spline} \\
\hline
 & & Est  & ESE & ASE & CP
   & Est  & ESE & ASE & CP
   & Est & ESE & ASE & CP \\
\hline
1&0.586
&0.565&0.144&0.123&0.870
&0.575&0.148&0.133&0.879
&0.572&0.139&0.136&0.900\\
2&0.770
&0.867&0.029&0.078&0.949
&0.768&0.150&0.136&0.835
&0.768&0.162&0.136&0.810\\
3&0.507
&0.406&0.114&0.101&0.760
&0.427&0.142&0.117&0.804
&0.493&0.115&0.127&0.910\\
4&0.029
&0.032&0.036&0.026&0.700
&0.033&0.037&0.029&0.780
&0.034&0.035&0.033&0.800\\
5&0.078
&0.840&0.033&0.086&0.001
&0.087&0.082&0.072&0.807
&0.076&0.059&0.070&0.890\\
6&0.000
&0.436&0.071&0.048&0.000
&0.074&0.077&0.046&0.798
&0.012&0.015&0.013&0.910\\
\hline
\end{tabular}
\caption{Simulation results examining the bias and variance of estimated $\widehat{p}$ values for Settings 1-6, over 1000 iterations, with a sample size of 10 patients in each study in each group, and 25 studies. The first column represents the true probability of the paradox in the new study. The Est column is the average and the ESE column is the empirical standard deviation of the $\widehat{p}$ values obtained using our proposed method, using linear, cubic, or cubic splines, summarized over 1000 simulation iterations. The ASE column indicates the average standard error and CP indicates the coverage of the 95\% confidence intervals using the \textbf{PAB  procedure}. Recall that in Settings 1 and 4, the conditional functions are truly linear; in Settings 2 and 5, the conditional functions are linear/quadratic; in Settings 3 and 6, the conditional functions are neither linear or cubic. } \label{sim_results_PAB}
\end{table}

\clearpage
\bibliographystyle{plainnat}
\bibliography{Surrogate_master}

\end{document}